\DeclareMathAlphabet{\mathsfsl}{OT1}{cmss}{m}{sl}
\newcommand{\PreserveBackslash}[1]{\let\temp=\\#1\let\\=\temp}
\newcolumntype{C}[1]{>{\PreserveBackslash\centering}p{#1}}
\newcolumntype{R}[1]{>{\PreserveBackslash\raggedleft}p{#1}}
\newcolumntype{L}[1]{>{\PreserveBackslash\raggedright}p{#1}}
\renewcommand{\theequation}{\thesection.\arabic{equation}}
\numberwithin{equation}{section}
\newtheorem{thm}{Theorem}[section]
\newtheorem{lem}[thm]{Lemma}
\theoremstyle{definition}
\newtheorem{rem}[thm]{Remark}
\begin{document}

\begin{frontmatter}

\title{RotEqNet: Rotation-Equivariant Network for Fluid Systems with Symmetric High-Order Tensors}

\author[purdue1]{Liyao Gao}
\author[jhu]{Yifan Du\fnref{fn1}}
\author[alec]{Hongshan Li\fnref{fn1}}
\author[purdue2]{Guang Lin\corref{gl}}

\fntext[fn1]{Equal contribution.}  
\address[purdue1]{Department of Mathematics, Purdue University, West Lafayette IN 47907, USA}
\address[jhu]{Department of Mechanical Engineering, John Hopkins University, Baltimore MD 21218, USA}
\address[alec]{Alectio, Santa Clara CA 95054, USA}
\address[purdue2]{Department of Mathematics and School of Mechanical Engineering, Purdue University, West Lafayette IN 47907, USA}
\cortext[gl]{Corresponding author.}

\ead{guanglin@purdue.edu}

\begin{abstract}
In the recent application of scientific modeling, machine learning models are largely applied to facilitate computational simulations of fluid systems. Rotation symmetry is a general property for most symmetric fluid systems. However, in general, current machine learning methods have no theoretical guarantee of Rotation symmetry. By observing an important property of contraction and rotation operation on high order symmetric tensors, we prove that the rotation operation is preserved via tensor contraction. Based on this theoretical justification, in this paper, we introduce Rotation-Equivariant Network (RotEqNet) to guarantee the property of rotation-equivariance for high order tensors in fluid systems. 
We implement RotEqNet and evaluate our claims with four case studies on various fluid systems. The property of error reduction and rotation-equivariance is verified in these case studies. Results are showing the high superiority of RotEqNet compared to traditional machine learning methods.  
\end{abstract}

\begin{keyword}
machine learning, tensor analysis, rotation-equivariant, fluid systems
\end{keyword}

\end{frontmatter}

\section{Introduction}
With recent developments in data science and computational tools, machine learning algorithms have been increasingly applied in different engineering and science areas to model physical phenomena. The data from physical experiments and numerical simulations are a source of knowledge about the physical world, on which data-driven methods could be performed to extract new physical laws \citep{huang2020data,raissi2017physics,carleo2019machine,kutz2017deep,wang2017physics,li2019accelerating}. 
For example, in turbulence RANS modeling in fluid mechanics, traditional modeling methods have failed in many flow scenarios. A unified RANS model that can successfully describe complex flows, including boundary layer, a strong rotation, separation still does not exist according to the author's knowledge \citep{durbin2018some,ling2015evaluation}. On the other hand, advanced measurement and direct numerical simulations provide plenty of data that could be utilized to establish and validate new models. With the above argument, data-driven methods are particularly suitable for turbulence modeling and some other areas in physics and engineering. There have been many attempts to discover new turbulence models using machine learning methods. Milano and Koumoutsakos \citep{milano2002neural} reconstruct near-wall flow applying neural networks and compared their results with linear methods (POD). Zhang and Duraisamy \citep{zhang2015machine} used Gaussian process regression combined with an artificial neural network to predict turbulent channel flow and bypass transition. Beck, Flad, and Munz \citep{beck2019deep} applied residual neural network for Large Eddy Simulation. Chen et. al. proposed an ODE network to generally learn differential equations \citep{chen2018neural}.

The physical laws often appear in the form of tensorial equalities which inherently obey certain types of symmetry. For example, the constitution laws in fluid and solid mechanics should obey translation and rotation invariance \citep{mase2009continuum}. The turbulence RANS model is local tensorial equality between mean velocity gradient and Reynolds stress. The turbulence RANS models should also be rotation invariant \citep{pope2001turbulent,pope1975more}. However, machine learning methods for RANS modeling do not automatically guarantee rotation invariance, if we use Cartesian components of tensors as input and output of training data. This problem has been addressed by \citep{ling2016machine,wang2017physics}. In \citep{ling2016machine,ling2016reynolds}, Reynolds stress is expressed as a general expansion of nonlinear integrity basis multiplied by scalar functions of invariants of strain rate and rotation rate tensors. Machine learning is performed to find these scalar functions of tensor invariants of strain rate and rotation rate tensors. Mathematically this expansion comes from an application of the Caylay-Hamilton theory. The special case used in \citep{ling2016machine,ling2016reynolds} is derived by S.B.Pope in \citep{pope1975more}. Although such construction is general and possible for higher-order tensors and tensor tuples containing multiple tensors, the number of this basis and the derivation complexity will grow exponentially and become prohibitive for real applications \citep{johnson2016handbook,smith1965isotropic}. 

Why would this problem of rotation-equivariance be hard to solve? At first glance, if a system has the property of rotation-equivariance, one has more information for this system. Therefore, this added property of rotation-equivariance would lower the performance of a learner. More specifically, adding this new rule of rotation symmetry in a system will require the machine learning algorithm to extract more rules from existing data \citep{mohri2018foundations}. 
In this case, the property of rotation-equivariance could be considered as a continuous group action. 
There is limited research in the field of deep learning that considers the preservation of symmetries under continuous group actions for physical systems.
To address our second point, continuous information is hard to be absorbed. If we consider a machine learning algorithm as an information compression process from input to output \citep{saxe2019information}, a continuous transformation as rotation will be difficult for learning algorithms to absorb.

Given the universal approximation theorem by \citep{hornik1989multilayer}, it would seem that the application of neural networks, especially deep neural networks could solve any problem. As formulated by \citep{muller1999application,weatheritt2017comparative,qin2019data,han2019solving}, advanced machine learning methods, especially deep neural networks \citep{lecun2015deep}, seem to provide a new opportunity for physical equations approximation. However, in this case of rotation symmetry, if we use a multiple layer perceptron $M$ to learn the relation $f$, then most likely $M$ does not preserve rotation-equivariance. Generally, the neural network function classes do not satisfy rotation equivariance.

There have been previous works considering group-equivariance with convolutional neural networks in image recognition. A general method has been proposed using group convolution \citep{cohen2016group,esteves2020theoretical,esteves2019equivariant}. Based on the idea of using convolution, several methods composed a steerable filter for rotation-equivariance in convolutional neural networks \citep{weiler2018learning,cheng2018rotdcf,finzi2020generalizing,gao2019rotation}. However, these works cannot be applied in physical systems as well. One of the most important reasons is that the rotation operation on the image is different from rotation operation on physical systems. Consider a rotation operation on a specific image. We are thinking of a transformation from polar coordinates centering at a certain point \citep{foley1996computer}. This kind of transformation is different from rotation operation on tensors. Additionally, these methods have a strong restriction that this model must be built on convolutional neural networks. Yet, considering physical systems, convolutional neural networks might not be the best choice since they are designed for image processing.

The problem of rotation-equivariance is also quite impossible to be simply solved by data augmentation and preprocessing. Mentioned by previous works \citep{ling2016machine}, a typical solution is to apply the technique of data augmentation. However, the method of data augmentation fails to have a theoretical guarantee of obtaining the property of rotation-equivariance with finite sample set.
Data augmentation method has a theoretical foundation that at infinite sample limit it will asymptotically reach rotation equivariance.
However, such a dataset is not only difficult to obtain but also requires much higher computation power while training the model. In the case of using naive preprocessing methods, the problem is that there are limited theoretical tools to deal with high-order tensors, and only limited methods to use for low order tensors. It is hard to apply specific techniques, such as diagonalization, in the case of high-order tensors. Since naive data preprocessing methods are impossible to apply, a more complex method with a theoretical guarantee should be proposed in order to solve this problem.

In this paper, we establish Rotation-Equivariant Network (RotEqNet), a new data-driven framework, which guarantees rotation-equivariance at a theoretical level. 
Different from previous methods, we first find a method to preserve rotation operation via tensor contraction. In our proposed position standardization algorithm, it could properly link a high-order tensor to a low order tensor with the same rotation operation. 
By applying mathematical tools for low order matrices (diagonalization and QR factorization), a desired standard position could be derived by the rotation matrix from the previous step. Standard position algorithm is proven to be rotation-invariant in Theorem \ref{thm:1}, \emph{i.e.} two tensors differ by a rotation would have the same standard position. Therefore, the learning rules based on standard position are forming a quotient space of the original rules in random rotated plural position \citep{weiler2018learning,zhou2019continuity}. In this way, RotEqNet lowers the training difficulty of a randomly positioned dataset. Further, RotEqNet is also proven to be rotation-equivariant, as we have shown in Theorem \ref{thm:2}. These advantages of RotEqNet would result in an observable error reduction compared to previously introduced data-driven methods. 
We applied RotEqNet into four different case studies ranging from second-order, third-order, and fourth-order. These case studies are designed based on Newtonian fluids, Large-eddy simulations, and Electrostriction. Improved performances could be observed for using RotEqNet. The error is reduced for 99.6\%, 15.62\%, and 54.63\% for second, third, forth-order case studies, respectively. 
Our contribution in this paper is three-fold: 
\begin{enumerate}
  \item We showed an important property of contraction operation on tensors. Contraction operation will preserve rotation operation on tensor with arbitrary order. This is stated in Lemma \ref{lemma2.3}. 
  \item We propose a properly designed RotEqNet with a position standardization algorithm to guarantee the property of rotation-equivariant. We proved the property of rotation-invariant of position standardization algorithm in Theorem \ref{thm:1} and the property of rotation-equivariant of RotEqNet rigorously in Theorem \ref{thm:2}.
  \item We implement our proposed algorithm and the architecture of RotEqNet. We further conduct case studies to show its credibility in design and superiority compared to baseline methods. 
\end{enumerate}

To provide a general architecture of our paper, in Section \ref{sec:preliminaries} we introduce basic definitions of rotation for arbitrary order tensor (tuples) and related concepts. In Section \ref{pre:rotEq} we formulate rotation invariance (equivariance) on supervised learning methods. The RotEqNet and main algorithm is presented in Section \ref{sec:roteqnet}, and numerical results are shown in Section \ref{sec:expr}. 

\section{Preliminaries and Problem Description}
\label{sec:preliminaries}
\subsection{Tensor and its operations}
\label{sec:tensorDef}
In this section, we first introduce an abstract way of defining tensor. One reason for us to introduce the more abstract way to think about tensors is that it provides a convenient formalism for the operations we will do on the tonsorial data discussed in the previous section. The operations are 

\begin{enumerate}
\item Linear transformation
\item Contraction
\end{enumerate}

The formalism helps us to prove that these two operations commute which lays 
theoretical ground for the computation of a representative of rotationally-relatated
tensors. We will call this representative \emph{standard position}

\subsubsection{Abstract definition of tensors}
Following \cite{curtis2012abstract}, fix a vector space $V$ of dimension $n$ over $\mathbb{R}$. A \emph{tensor product}
$V\otimes V$ is a vector space with the property that $\mathbb{R}$-bilinear maps
$V \times V \rightarrow \mathbb{R}$ are in natural one-to-one correspondence
with $\mathbb{R}$-linear maps $V \otimes V \rightarrow \mathbb{R}$. 

The tensor product $V\otimes V$ can be constructed as the 
quotient vector space $V\times V / C$, where $C$ is generated by 
vectors of the following types
\begin{equation}
    (i)\;(x+y, z) - (x, z) - (y, z) \\
    (ii)\;(x, y+z) - (x, y) - (x, z) \\
    (iii)\;(ax, y) - a(x, y) \\
    (iv)\;(x, ay) - a(x, y) \\
\end{equation}
where $x$ and $y$ are vectors in $V$ and $a$ is a scalar in $\mathbb{R}$. This means
any element in $C$ can be written as a linear combination of vectors of the 
above form. $C$ is not necessarily a vector space of finite dimension. 
But the quotient space $V\otimes V$ is. Let $g: V \times V \rightarrow V\otimes V$
be the natural projection map, then we use $x\otimes y$ to denote the image of $(x, y)$
under $g$. 

Let $\langle e_1,\cdots, e_n\rangle$ be a basis of $V$, then $e_i\otimes e_j$ for 
$i=1,...,n$ and $j=1,...,n$ form a basis of $V\otimes V$. 
This means any vector $p \in V\otimes V$ can be written as 
\begin{equation}
    \sum_{i,j}a_{ij}e_i\otimes e_j
\end{equation}
for some $a_{ij} \in \mathbb{R}$. 

Here are some relations of tensors which come directly as a consequence
of the relations generating $C$:
\begin{equation}
    a(e_i\otimes e_j) = a e_i \otimes e_j = e_i \otimes a e_j
\end{equation}

\begin{equation}
    (a_i e_i + a_j e_j)\otimes (a_k e_k) = a_ia_k(e_i\otimes e_j) + 
    a_ja_k (e_j\otimes e_k)
\end{equation}

The representation of a tensor in $V\otimes V$ is similar to the representation
of a linear map $V \rightarrow V$, i.e. a matrix. In fact, there is a natural way
to think of a tensor as a linear map: 

For each element $e_i\otimes e_j$ in the basis of $V\otimes V$, we can think of it as 
a linear map $V \rightarrow V$ by defining $e_i\otimes e_j (v) = e_i<e_j,v>$, where 
$<,>$ is the natural inner product on $V$. 
Extend the definition linearly to every element in $V\otimes V$, we obtain a way to 
identify $V\otimes V$ as the space of linear map $V \rightarrow V$. In fact, the 
tensor $\sum_{i,j}a_{i, j} e_i\otimes e_j$ corresponds to the linear map represented
by the matrix $[a_{ij}]$. 

We have defined the tensor product $V\otimes V$ over $V$. The definition/construction of order $k$ tensor $\overbrace{V\otimes\cdots\otimes V}^{k}$ follows the same course. We will denote order $k$
tensor by $\otimes^k V$.

The basis of $\otimes^k V$ is given by $e_{i1}\otimes\cdots\otimes e_{ik}$, where 
$i=1,...,n$ and $j=1,...,k$. With respect to this basis, any order $k$ tensor can
be written as $\sum_{i1,\cdots,ik}a_{i1,...,ik}e_{i1}\otimes\cdots\otimes e_{ik}$. 
Analogous to the order 2 case, we can think of an order $k$ tensor as a $k$-dimensional
matrix, the typical way a tensor in physical experiments are represented.

We will use $T^k$ to denote a tensor of order $k$, i.e. a vector in $\otimes^kV$.
$k$ is called the rank of the tensor.

\subsubsection{Rotation on tensors: a linear transformation}
A linear transformation on higher-order tensor is a generalization of a linear transformation on the first-order tensor, i.e. a vector.

Let $g: V \rightarrow V$ be a linear transformation. Use the basis 
$\langle e_1, \cdots, e_n\rangle$ of $V$, we can represent this expression with the equation
\begin{equation}
    g(e_i) = \sum_{j=1}^n a_{ij}e_j
\end{equation}
Let $M(g)$ denote the matrix representation of $g$ with respect to
the basis $\langle e_1, \cdots, e_n\rangle$. Then
\begin{equation}
    M(g) = [a_{ij}]^t
\end{equation}
i.e. the transpose of the matrix $[a_{ij}]$

The map $g$ naturally induces a map $\otimes^k g$ on $\otimes^k V$. 
On the basis element $e_{i1}\otimes\cdots\otimes e_{ik}$, the action of 
$\otimes^k g$ is defined as 
\begin{equation}
    e_{i1}\otimes\cdots\otimes e_{ik} \mapsto 
    g(e_{i1}) \otimes\cdots\otimes g(e_{ik})
\end{equation}
For any tensor $T \in \otimes^kV$, we will use $g(T)$ to denote 
the extension of $g$ on $\otimes^k V$

There is a convenient way to represent a linear transformation
of 2-tensor as matrix multiplication. 

For a 2-tensor $T = \sum_{i, j}b_{ij} e_i\otimes e_j$, use
$M(T)$ be the matrix whose $(i, j)$ term is $b_{ij}$. 
\begin{lem}
Rotation operation by matrix $R$ on second-order tensor (matrix) is a change of basis operation. 
\begin{equation}
    M(R(T)) = M(R)\times  M(T) \times M(R)^t,
\end{equation}
\label{lemma2.1}
where $\times$ here means the usual matrix multiplication. 
\end{lem}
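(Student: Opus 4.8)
The plan is to unwind both sides of the claimed identity $M(R(T)) = M(R)\,M(T)\,M(R)^t$ directly from the definitions given in the preceding subsections, working in the fixed basis $\langle e_1,\dots,e_n\rangle$ and checking equality entry-by-entry. The only facts I will need are: the action of $\otimes^k g$ on basis elements (here with $k=2$), so that $R(e_i\otimes e_j) = R(e_i)\otimes R(e_j)$; the formula $g(e_i) = \sum_j a_{ij} e_j$ together with the convention $M(g) = [a_{ij}]^t$; and bilinearity of $\otimes$, which lets me expand $R(e_i)\otimes R(e_j)$ into a double sum over the basis $e_p\otimes e_q$ of $V\otimes V$.

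First I would write $T = \sum_{i,j} b_{ij}\, e_i\otimes e_j$, so by linearity $R(T) = \sum_{i,j} b_{ij}\, R(e_i)\otimes R(e_j)$. Next I substitute $R(e_i) = \sum_p r_{ip} e_p$ and $R(e_j) = \sum_q r_{jq} e_q$ (writing $R(e_k)=\sum_\ell r_{k\ell}e_\ell$ for the matrix $[r_{k\ell}]$ of coefficients, whose transpose is $M(R)$), and use bilinearity to get $R(T) = \sum_{p,q}\bigl(\sum_{i,j} r_{ip}\, b_{ij}\, r_{jq}\bigr) e_p\otimes e_q$. Hence the $(p,q)$ entry of $M(R(T))$ is $\sum_{i,j} r_{ip}\, b_{ij}\, r_{jq}$. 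Then I compute the $(p,q)$ entry of the right-hand side: since $M(R) = [r_{k\ell}]^t$ has $(p,i)$ entry $r_{ip}$, and $M(R)^t$ has $(j,q)$ entry $r_{jq}$, and $M(T)$ has $(i,j)$ entry $b_{ij}$, the matrix product $M(R)\,M(T)\,M(R)^t$ has $(p,q)$ entry $\sum_{i,j} r_{ip}\, b_{ij}\, r_{jq}$, which matches. This establishes the identity; the ``change of basis'' remark is then just the observation that for $R\in SO(n)$ one has $R^t = R^{-1}$, so the formula reads $M(R(T)) = M(R)\,M(T)\,M(R)^{-1}$, the standard conjugation/change-of-basis form.

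The only genuinely delicate point is bookkeeping with the transpose convention: the paper defines $M(g) = [a_{ij}]^t$ rather than $[a_{ij}]$, so I must be careful to track which index of the coefficient array is the row and which is the column at every step, and make sure the two transposes on the right-hand side land on the correct factors. Getting the index placement consistent between the definition of $\otimes^k g$, the definition of $M(g)$, and the definition of $M(T)$ is where a sign-free but easy-to-botch error could creep in, so I would state the convention once explicitly and then just verify the single scalar identity $\sum_{i,j} r_{ip} b_{ij} r_{jq}$ appears on both sides. Everything else is routine expansion by bilinearity, with no analytic or structural obstacle.
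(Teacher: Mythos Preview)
Your proof is correct and rests on the same idea as the paper: verify the identity on simple tensors via $R(e_i\otimes e_j) = R(e_i)\otimes R(e_j)$ and extend by bilinearity. The paper's write-up differs only in presentation: rather than expanding in the basis and matching entries, it works with an arbitrary simple tensor $v_1\otimes v_2$, first records the rank-one identity $M(v_1\otimes v_2) = M(v_1)\,M(v_2)^t$ (column vector times row vector), and then computes $M(R(v_1\otimes v_2)) = M(R(v_1))\,M(R(v_2))^t = M(R)\,M(v_1)\,M(v_2)^t\,M(R)^t = M(R)\,M(v_1\otimes v_2)\,M(R)^t$ purely by matrix algebra, extending to general $T$ by linearity. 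Your entry-by-entry computation is equivalent and makes the transpose convention $M(g)=[a_{ij}]^t$ more explicitly visible; the paper's version is a line shorter but hides the index bookkeeping inside the outer-product identity.
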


\begin{rem}
Rotation operation by matrix $R$ on first-order tensor (vectors) $T$ could be viewed as
\begin{equation}
    M(R(T)) = M(R)\times  M(T).
\end{equation}
\label{lemma2.2}
\end{rem}

The proof here of Lemma \ref{lemma2.1} and Remark \ref{lemma2.2} are left in \ref{appendix:tensor}. 

Lemma \ref{lemma2.1} and remark \ref{lemma2.2} will be used in the proof of Theorem \ref{thm:1}. As we have shown in this subsection, one could use a matrix form of rotation operation with certain rules of matrix multiplication to perform a rotation on the tensor. In the following proofs of this paper, we applied this idea to perform rotation operation on tensors via matrix multiplication. 
\subsubsection{Contraction on tensors: reduction of order}
Let $\langle, \rangle$ be the standard inner product on $V$. Using this inner product, we can define the contraction of a tensor. It "merges" vectors on the specified axes using the inner product and reduces the rank of the tensor by 2. Formally, let $C(a, b)$ denote the 
contraction along axis-$a$ and axis-$b$. Here, the axis means the ordinal of $V$ in $\otimes^kV$. For example, axis-$1$ refers to the first copy of $V$ in 
$\otimes^kV$. 

On the element $\otimes_{j=1}^k v_{ij}$, $C(a, b)$
acts on it by 
pairing $v_{ia}$ and $v_{ib}$ via the inner product $\langle, \rangle$, i.e.
\begin{equation}
    C(a, b)(v_{i1}\otimes\cdots\otimes v_{in}) = \langle v_{ia}, v_{ib}\rangle
    v_{i1}\otimes\cdots \check{v_{ia}}\cdots\check{v_{ib}}\cdots\otimes v_{in}
\end{equation}
where $\check{v}$ means $v$ is not present. 

We can then define $C(a, b)$ on $\otimes^kV$ by extending linearly. 
When $k=2$, contraction is nothing other than taking the trace of the 
corresponding matrix.

\begin{lem}
Let $R: V \rightarrow V$ be a rotation. Let $T \in \otimes^k V$, then
\begin{equation}
    C(a,b)(R(T)) = R(C(a,b)(T))
\end{equation}
\label{lemma2.3}
\end{lem}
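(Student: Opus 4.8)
The plan is to reduce the identity to decomposable tensors and then exploit the fact that a rotation preserves the inner product that defines the contraction. Both $C(a,b)$ and the induced map $\otimes^k R$ (written simply as $R$ in the statement) are $\mathbb{R}$-linear, and every element of $\otimes^k V$ is a finite sum of decomposable tensors $v_{i_1}\otimes\cdots\otimes v_{i_k}$; hence it suffices to verify
\[
C(a,b)\bigl(R(v_{i_1}\otimes\cdots\otimes v_{i_k})\bigr)=R\bigl(C(a,b)(v_{i_1}\otimes\cdots\otimes v_{i_k})\bigr)
\]
on such elements, the general case following by summing.

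First I would unfold the left-hand side. By the slot-wise definition of $\otimes^k R$ we have $R(v_{i_1}\otimes\cdots\otimes v_{i_k})=R(v_{i_1})\otimes\cdots\otimes R(v_{i_k})$, and applying $C(a,b)$ pairs the $a$-th and $b$-th factors through the inner product, producing
\[
\langle R(v_{i_a}),R(v_{i_b})\rangle\; R(v_{i_1})\otimes\cdots\check{R(v_{i_a})}\cdots\check{R(v_{i_b})}\cdots\otimes R(v_{i_k}).
\]
Next I would unfold the right-hand side: first $C(a,b)(v_{i_1}\otimes\cdots\otimes v_{i_k})=\langle v_{i_a},v_{i_b}\rangle\; v_{i_1}\otimes\cdots\check{v_{i_a}}\cdots\check{v_{i_b}}\cdots\otimes v_{i_k}$, and then applying $R$ (now its induced map on $\otimes^{k-2}V$) scales by the same scalar $\langle v_{i_a},v_{i_b}\rangle$ and sends each surviving factor $v_{i_\ell}$ to $R(v_{i_\ell})$. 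Comparing the two results, they coincide exactly when $\langle R(v_{i_a}),R(v_{i_b})\rangle=\langle v_{i_a},v_{i_b}\rangle$.

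That last equality is the heart of the matter and is where ``rotation'' is essential rather than ``arbitrary linear map'': since $R\in SO(n)$ we have $M(R)^tM(R)=I$, so $R$ preserves the standard inner product $\langle\cdot,\cdot\rangle$ on $V$. Substituting this into the left-hand expression makes it identical to the right-hand one, which finishes the decomposable case, and linearity finishes the proof.

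I do not expect a genuine obstacle; the calculation is routine. The only points needing care are: (i) recording that $C(a,b)$ is well defined and $\mathbb{R}$-linear, so that the reduction to decomposable tensors is legitimate; (ii) keeping track of the overloaded symbol $R$, which denotes the map on $V$, its extension $\otimes^k R$, and its extension $\otimes^{k-2}R$, and noting that these extensions are compatible with deleting two factors (immediate from the slot-wise definition); and (iii) invoking orthogonality of $R$ at precisely the contraction step. Alternatively one could run the whole argument on basis elements $e_{i_1}\otimes\cdots\otimes e_{i_k}$ using $\langle R e_{i_a}, R e_{i_b}\rangle=\delta_{i_a i_b}$, but the decomposable-tensor version above is cleaner and basis-free.
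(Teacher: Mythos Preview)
Your proposal is correct and follows essentially the same route as the paper: reduce by linearity to decomposable tensors, expand both sides slot-wise, and invoke the fact that a rotation preserves the inner product so that $\langle R(v_{i_a}),R(v_{i_b})\rangle=\langle v_{i_a},v_{i_b}\rangle$. Your write-up is in fact slightly more careful than the paper's, since you explicitly flag the overloading of $R$ across $\otimes^k V$ and $\otimes^{k-2}V$ and the need for $C(a,b)$ to be well defined and linear.
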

Lemma \ref{lemma2.3} shows an interesting connection between rotation operation and contraction. To understand this lemma, it represents that the contraction of a tensor is compatible with a linear transformation if this linear transformation is a rotation. This is an important lemma which is the foundation of the entire analysis in this paper. We would further utilize this lemma for extracting its rotation operation from higher (arbitrary) orders. We show the proof in \ref{appendix:tensor}. 

\label{sec:prob}
\subsection{Supervised learning setup}
In our problem, given data set $\mathcal{D}=\{X_i;y_i\}_{i=1, ..., N}$. The data set contains $N$ input-output pairs $(X_i;y_i)$. The input here is a tensor tuple: 
\begin{equation}
    X_i=[X_1, X_2, ..., X_{N_x}]
\end{equation}
$N_x$ is the length of $X_i$. Normally, we only have one output.

Generally speaking, following the definition of \cite{bishop2006pattern,tao2005supervised}, parametric supervised learning can be viewed as a type of a model composed from two parts. The first part is a 
predictor. Given parameter $\theta$, we have:
\begin{equation}
    \hat{y}=\mathcal{M}^{\theta}(X_i)
\end{equation}

, where $\hat{y}$ is the prediction output of learning model $M$, $\theta$ is the parameter of $M$. As stated, it predicts value based on input $X_i$. 

The second part is an optimizer, which updates the parameter $\theta$ based on a loss function. For a regression model, a typical loss function would be defined as:
\begin{equation}
    L(M, \theta)=\frac{1}{N} \sum_{i=1}^{N} \|y_i-M^{\theta}(X_i)\|^2,
\end{equation}
where $\|\cdot\|$ represents 2-norm. 

We usually hopes to minimize this loss function. It is formulated by: 
\begin{equation}
    \hat{\theta} = \arg\min_{\theta} L(M, \theta)
\end{equation}
where $\mathcal{M}$ is a learning model and $\mathcal{M}^{\hat{\theta}}$ is the optimal solution. Specifically, in this work, we applied Neural Networks \citep{specht1991general} and Random Forests \citep{liaw2002classification} in the case studies.

\subsection{Obtaining rotation-equivariance properties in systems using supervised learning}
\label{pre:rotEq}

Group equivariance is an important property for most physical systems. Typical examples of group equivariance could be rotation group equivariance, scaling group equivariance, and translation group equivariance. Mathematically, group equivariance is a property of a mapping $f:X\rightarrow Y$ to commute from $X$ to $Y$ under rotation group actions. Specifically, let $R\in SO(n)$ be a rotation action. $f:X\rightarrow Y$ is rotation-equivariant if
\begin{equation}
    f(R(x))=R(f(x)), \;\;\;\forall R \in SO(n),\: x \in X.
\end{equation}
As a special case of rotation-equivariant, a function $f:X\rightarrow Y$ is rotation-invariant if:
\begin{equation}
    f(R(x))=f(x), \;\;\;\forall R \in SO(n),\: x \in X.
\end{equation}

Since supervised learning models could be considered as functions, name a machine learning model as $M^{\theta}$. For a rotation operation $R$, we hope to obtain the property that: 
\begin{equation}
    M^{\theta}(R (x))=R(M^{\theta}(x)), \;\;\;\forall R \in SO(n),\: x \in X
    \label{def:rotEq}
\end{equation}
For analysis below in Sec. \ref{sec:analysis}, we prove the rotation-equivariance property following the definition stated here in Equ. \ref{def:rotEq}. In other words, if a system would satisfy the property in Equ. \ref{def:rotEq}, then this system is rotation-equivariant.

\subsection{Modeling symmetric fluid systems via supervised learning}

The machine learning approach to the fluid dynamics modeling involves training a supervised learning model $\mathcal{M}$ using
$X_i$ as features and $Y$ as label. 

In our case, the underlying space $S$ of the fluid dynamic system is complete with respect to rotation. This means for all rotation 
$R: \mathbb{R}^n \rightarrow \mathbb{R}^n $, $R(p) \in F$ for all $p \in S$. 
The objects we want to model via machine learning are rotation-equivariant tensorial
fields on $S$. 

Let $X$ be a tensorial field on $S$, for any point $p \in S$, we use $X(p)$ to denote
the tensor at $p$ (for example, pressure at a particular point in a fluid dynamics system). $X$ is said to be \emph{rotation-equivariant} if for all point $p \in S$ and
all rotation $R$
\[
    X(R(p)) = R(X(p))
\]

Suppose one has tensorial fields $X_1,\cdots, X_n, Y$ on $S$ such that
$X_i$ and $Y$ are related by some unknown physical law $f$ such that
\[
    f(X_1, \cdots, X_n) = Y
\]

Supervised machine learning methods can be used here to learn a function $\mathcal{M}$
that approximates $f$ such that $\mathcal{M}$ generalizes well on new data.

Suppose those tensorial fields are rotation-equivariant, then naturally the model
$f$ as well its proxy $\mathcal{M}$

\section{Rotation Equivariant Network}
\label{sec:roteqnet}
In this section, we would like to propose Rotation Equivariant Network (RotEqNet) to solve rotation problems for high order tensors in fluid systems. RotEqNet is based on the position standardization algorithm, as we would further discuss in Section \ref{RotInvExtrAlgo}. We first provide a general description of the whole architecture in \ref{motivation}. 

\subsection{Architecture} \label{motivation}

As shown in Figure\ref{fig:RotEqNet}, RotEqNet generally goes through three important steps: position standardization, prediction of kernel predictor, and position resetting. To be specific, the position standardization is an algorithm to transfer incoming tensor to its standard position. In Figure\ref{fig:RotEqNet}, the 'even order standardization' and 'odd order standardization' sections denote this algorithm in position standardization. Then, $X_s$ is considered as a standard position of input tensor $X$, and $R$ is an extracted rotation operation to transfer between standard position and original position. The output of kernel predictor is only dealing with standard positions. This will result the output $y_s$ in its standard position as well. Finally, apply $R^{-1}$ to output $y_s$ will be our final prediction. A general mathematical description of this process could be described as: 

\begin{figure}
    \centering
    \includegraphics[scale=0.5]{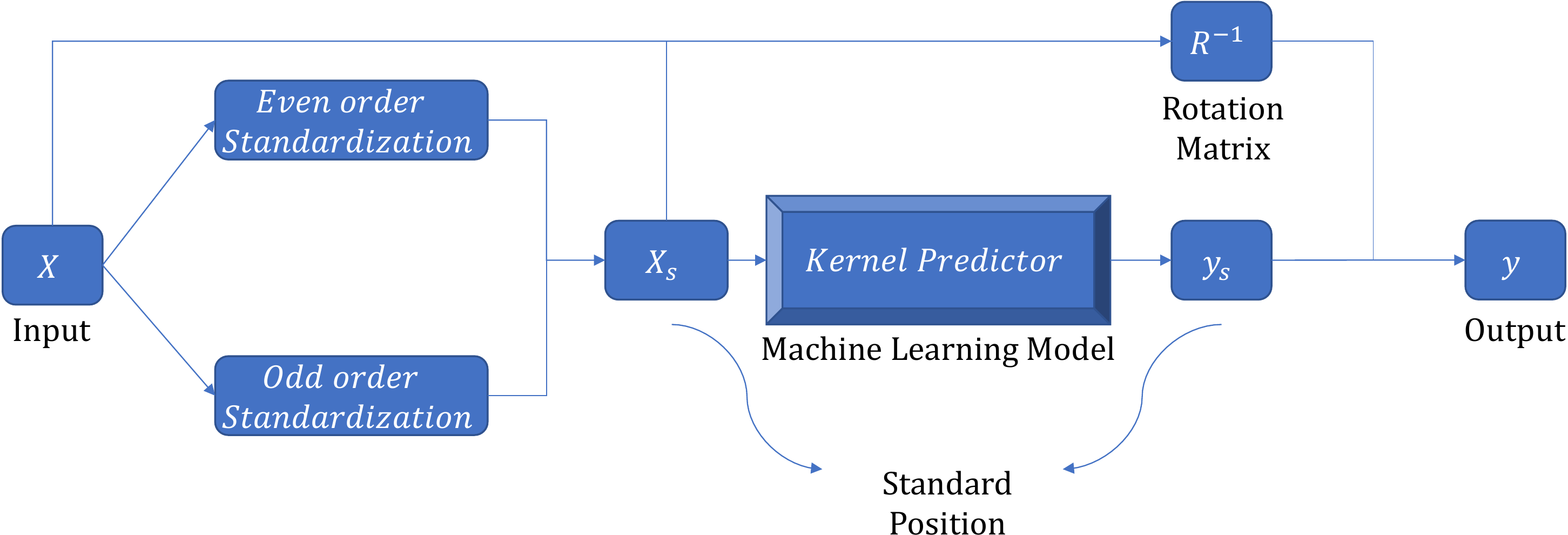}
    \caption{Rotation-equivariant Network Architecture}
    \label{fig:RotEqNet}
\end{figure}

\begin{equation}
    y=R(M^{\theta}(R^{-1}(T)))
\end{equation}

How would this process help to solve rotation problems for high order tensors? \label{reason}

An important reason is related to a reduced function space for learning. When a learning model is only training with the standard position, it would no longer still have to deal with the entire group action causing a group-equivariant, but only need to focus on the pattern by the related physical equation. 
Name the rotation group as G, and consider a full function space $\mathbf{C}(X,Y)$. As mentioned in \citep{weiler2018learning}, instead of performing regression on $\mathbf{C}(X,Y)$, RotEqNet is essentially exploring a much smaller space $\mathbf{C}(X/G,Y/G)$. The reduction of input-output dimensionality makes the training easier. With the same number of samples, the pattern for learning requires a far smaller space to explore. 
The second reason is RotEqNet could provide a theoretical guarantee of the property of rotation-equivariant. Utilizing rotation symmetry as a strong prior for most physical systems, RotEqNet have a better generalized result learning from limited amount of data. 

The following subsections will introduce position standardization algorithm in a complete manner with proof on its property of rotation-invariant in Theorem \ref{thm:1}. Then, we will demonstrate the proof of showing RotEqNet is rotation-equivariant in Theorem \ref{thm:2}.

\subsection{Position standardization algorithm for High Order Tensors}

\label{RotInvExtrAlgo}
Let $\mathcal{D}$ denote our data set. 
The first stage of RotEqNet is to find a good representative of all tensors that are related to each other by rotation. We will call this representative the sample in "standard position," and we will denote it by $(X_s, Y_s)$. We will use $S$ to denote the position standardization algorithm and $S(X, Y) = (X_s, Y_s)$ to mean reducing $(X, Y)$ to its standard position.

$S$ has the following property that $\forall (X, Y) \in \mathcal{D}$ and all rotation operation $R\in SO(n)$,
\begin{equation}
    S(R(X), R(Y)) = (X_s, Y_s).
\label{def:standardpos}
\end{equation}
This means, 
$S$ produces exactly the same output no matter how $(X, Y)$ is 
rotated, \emph{i.e.} it is rotation-invariant.

Intuitively, for a tensor $T$, we are selecting a representative on the orbit $O(T)$, (where $O(T)=\{R\cdot T|R\in SO(n)\}$), as the rotation invariant of a $T$ \citep{pinter2010book}. In our algorithm, we initially perform a tensor contraction to higher-order tensors, reducing the dimension to obtain a lower order tensor. Then using diagonalization for even cases and QR factorization for odd cases, the algorithm could obtain a rotation operation acting on $T$. Finally, it could get a tensor in standard position by rotating $T$ the original tensor with the inverse of the obtained rotation matrix.

This operation is compatible with the theoretical result shown in Lemma \ref{lemma2.3}.

\subsubsection{Tensor of even order}
\begin{figure}
    \centering
    \includegraphics[scale=0.6]{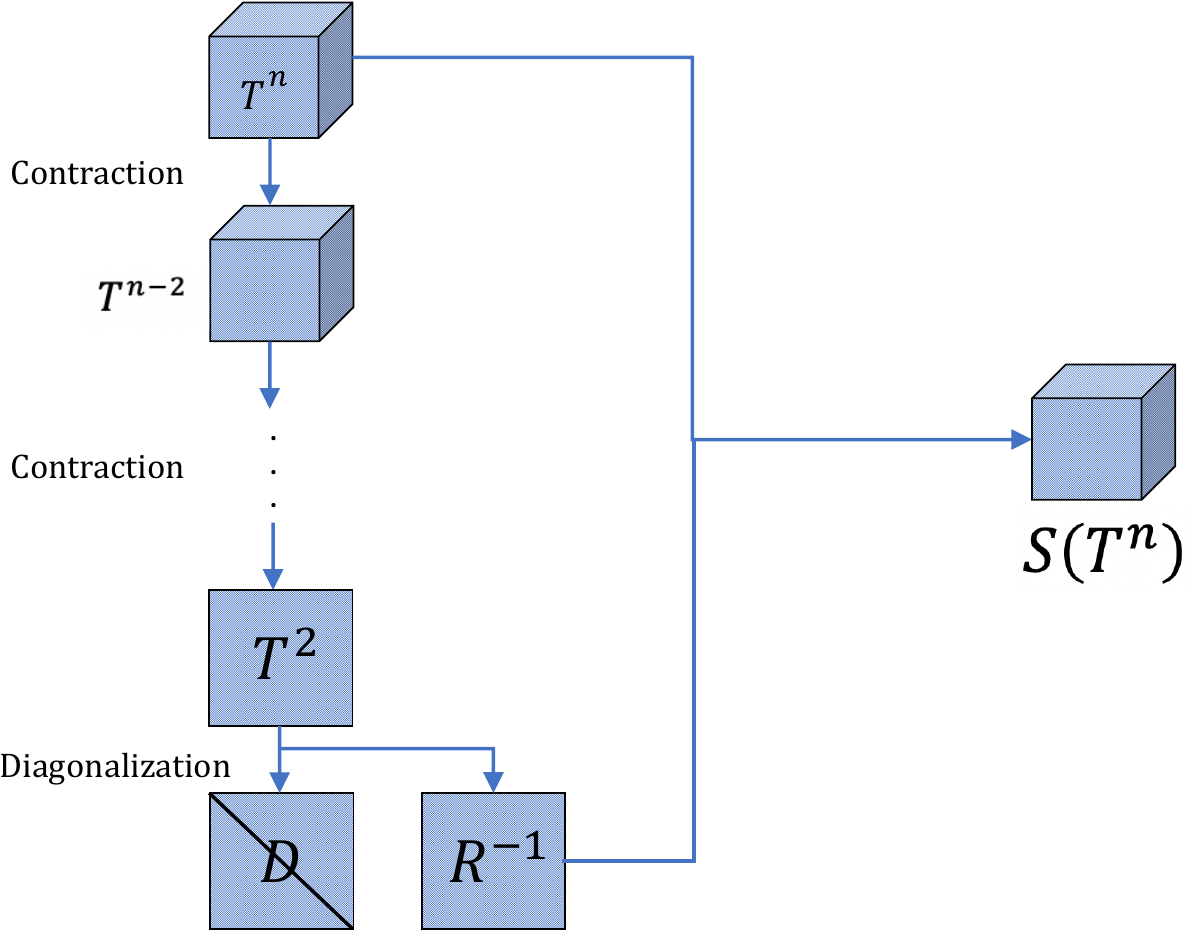}
    \centering
    \caption{Rotation-invariant extraction for even order.}
    \label{fig:evenorder}
\end{figure}
Given a symmetric tensor of even order $T^n \in \otimes^n V$ ($n$ is even). Let $\mathcal{C}$ denote a sequence of contraction
along the first two axes until we reach a second-order tensor. Applying $\mathcal{C}$ to $T^n$ we get:
\begin{equation}
T^2 = \mathcal{C}(T^n)
\end{equation}

\begin{align}
T^n \underrightarrow{\;\;\mathcal{C}\;\;} T^{n-2}\underrightarrow{\;\;\mathcal{C}\;\;} (...) \underrightarrow{\;\;\mathcal{C}\;\;}T^2
\end{align}

Then we find the orthonormal eigen-vectors of $T^2$ and use them to form the 
orthonormal matrix $R$ that diagonalize $T^2$
\begin{equation}
    T^2 = R^{-1} \times D \times R
\end{equation}
Since $R$ is an orthonormal matrix, we have
\begin{equation}
    R^{-1} = R^T
\end{equation}

We will call $D$ the standard position of $T^2$.
We write $R(T^2) = D$ to shorten the notation

Since contraction and rotation are compatible by Theorem \ref{lemma2.3}. We can apply $R$ to $T^n$ before we apply contraction, and we will have
\begin{equation}
\mathcal{C}(R(T^n)) = D
\end{equation}

For the even tensor $T^n$, we define 
\begin{equation}
    S(T^n) = R^{-1}(T^n)
\end{equation}

\subsubsection{Tensor of odd order}

Why would it be different for even order and odd order? Since odd order cannot directly reduce its dimension to 2 by contraction. Due to the fact that each contraction will reduce the dimensionality by 2, the reduced dimension will also be an odd number, which cannot be 2. Involving in this problem, this would further be impossible to extract the rotation matrix, which is impossible to rotate the tensor into a standard position. The following described is the method that we use to solve the problem.

Given a symmetric tensor of an odd order tensor $T^n \in \otimes^n V$ ($n$ is odd). Let $\mathcal{C}$ denote a sequence of contraction
along the first two axes until we reach a third-order tensor. Applying $\mathcal{C}$ to $T^n$ we get:
\begin{equation}
T^3= \mathcal{C}(T^n)
\end{equation}
\begin{figure}
    \centering
    \includegraphics[scale=0.65]{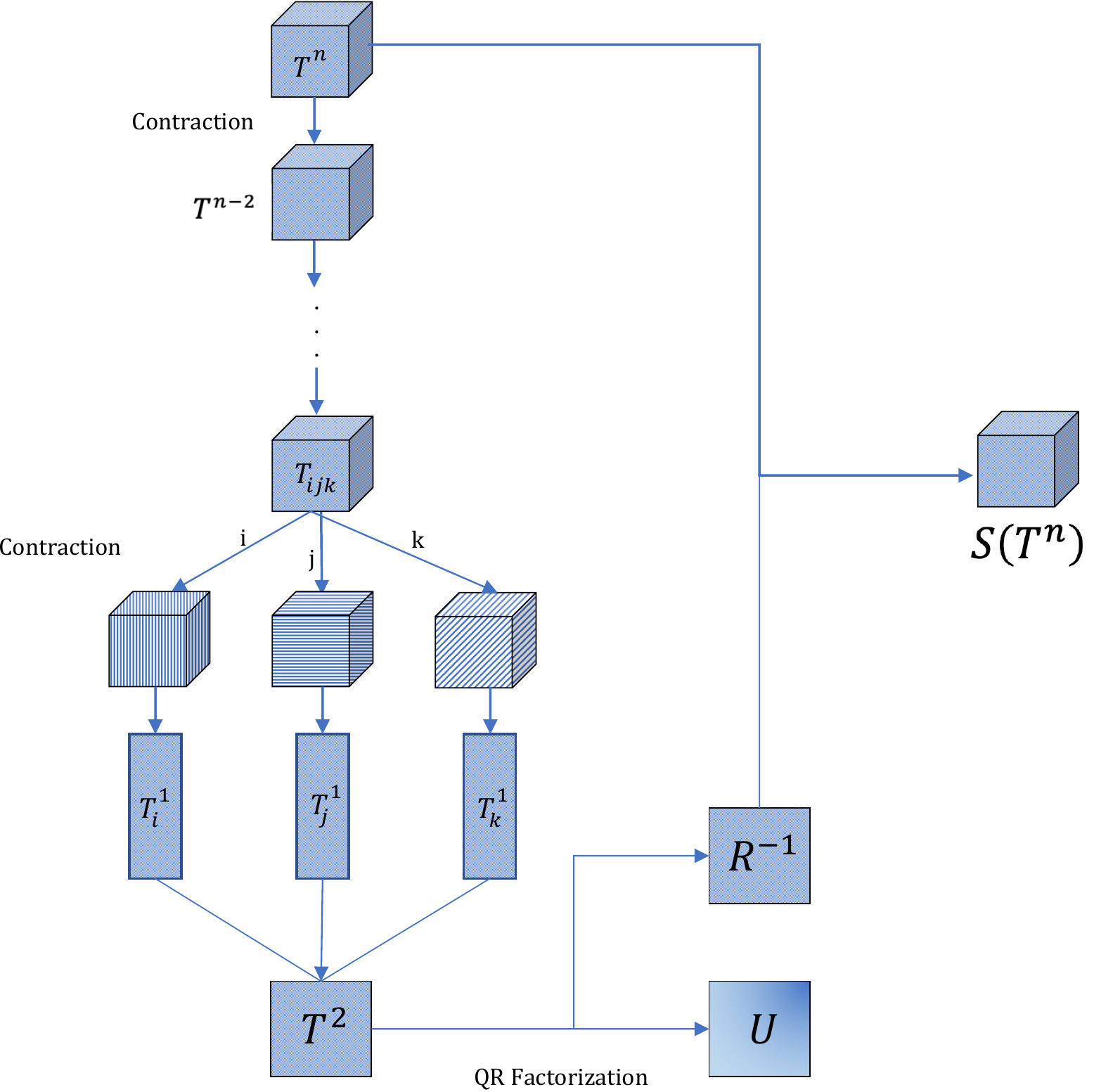}
    \caption{Rotation-invariant extraction for odd order.}
    \label{fig:my_label}
\end{figure}

After we obtain $T^3$, we could obtain three different order one tensors by contracting it. Name the contracted results, which are first-order tensors i.e. vectors, $V_1, V_2, V_3$. We could get an order two tensor by concatenating them.

\begin{equation}
T^2=(V_1, V_2, V_3)
\end{equation}

Then, we could perform the a similar process as before. We perform QR factorization to obtain rotation matrix $R$. 
\begin{equation}
T^2=R\times U^2, 
\label{QR factorization}
\end{equation}
For odd tensor, we define: 
\begin{equation}
    S(T^n) = R^{-1}(T^n)
\end{equation}

The pseudocode of our proposed algorithm is documented in Algorithm 1. We evaluate our method of position standardization algorithm in Section \ref{sec:expr}. 

\begin{figure}
    \centering
    \includegraphics[scale=0.95]{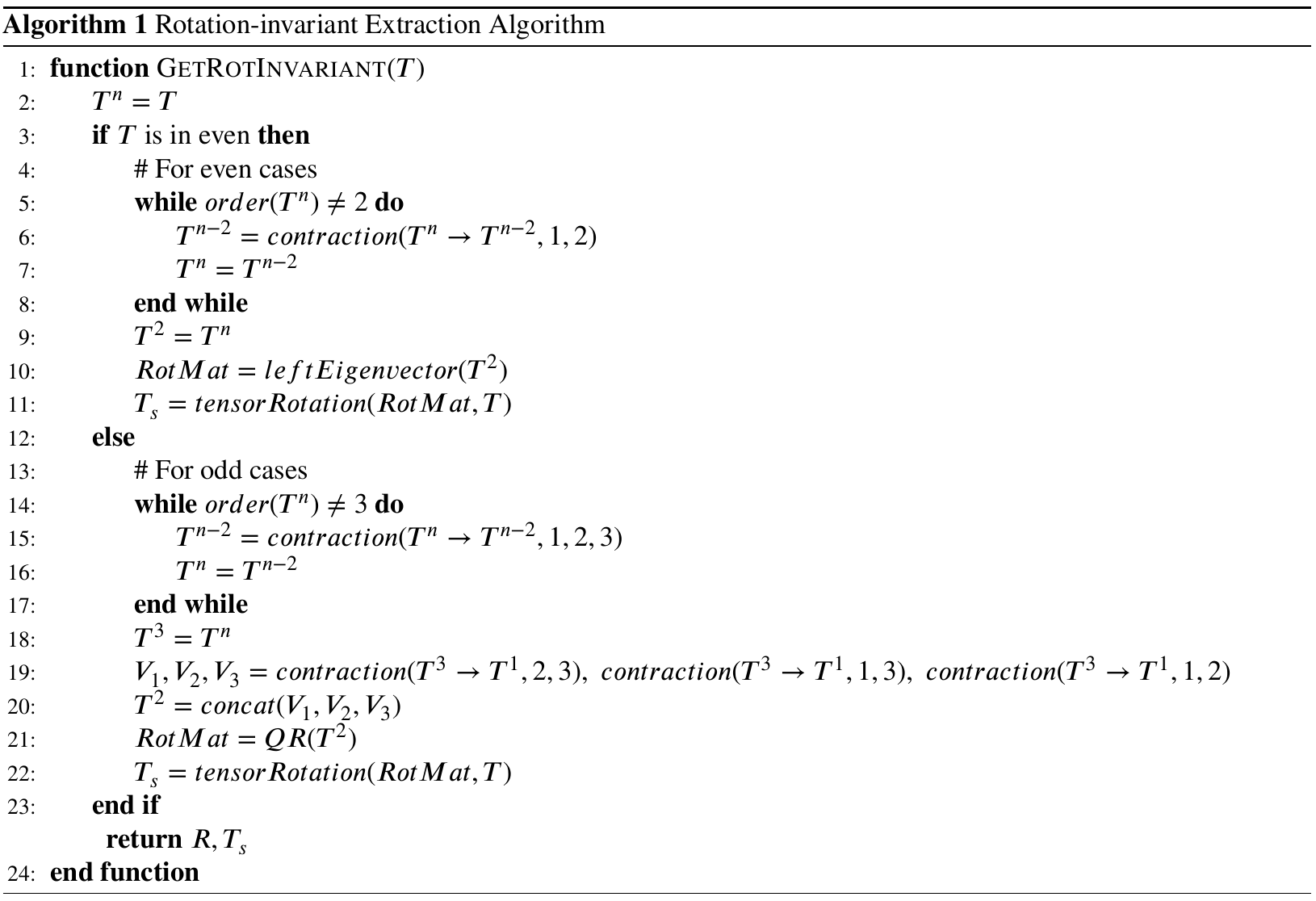}
    \label{rotexAlgo}
\end{figure}

\subsection{Theoretical Analysis of Rotation-equivariant property}

\label{sec:analysis}
In our analysis, we aim to show the rotation-equivariant property of RotEqNet. As an important first step, we need to analyze the property of rotation invariant (standard position) derived by the position standardization algoriTheorem We hope to show Equ. \ref{def:standardpos} is true. Once Equ. \ref{def:standardpos} is proved true, RotEqNet will automatically be rotation-equivariant based on its architecture. 

To sketch an outline about theorems below, Lemma \ref{lemma2.3} would serve as an important fact for preserving rotation information after contraction. Our algorithm has been analyzed by Theorem \ref{thm:1}. 

\subsubsection{Main theorems and proofs}

\begin{thm}
\label{thm:1}
$S$ is rotation invariant, \emph{i.e.} for all rotation $R \in SO(n)$ and 
symmetric tensor $T \in \otimes^k V$
\begin{equation}
    S(R(T)) = S(T)
\end{equation}
\end{thm}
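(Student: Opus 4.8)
The plan is to prove Theorem~\ref{thm:1} by tracking carefully how each step of the position standardization algorithm transforms under a rotation $R$, and showing the final output is unchanged. I would split the argument into the even-order case and the odd-order case, exactly as the algorithm is defined, since the two use different matrix factorizations (diagonalization versus QR). In both cases the crucial input is Lemma~\ref{lemma2.3}: contraction commutes with rotation, so a sequence $\mathcal{C}$ of contractions applied to $R(T)$ yields $R$ applied to $\mathcal{C}(T)$. Thus the whole problem reduces to the low-order tensor ($T^2$ in the even case, the assembled $T^2$ from three vectors in the odd case) and the behaviour of the eigen-decomposition / QR decomposition under the change of basis $M(R(T^2)) = M(R)\,M(T^2)\,M(R)^t$ from Lemma~\ref{lemma2.1}.

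For the even case: let $T^2 = \mathcal{C}(T^n)$ and let $R_T$ be the orthonormal matrix of eigenvectors diagonalizing $T^2$, so $T^2 = R_T^{-1} D R_T$ and $S(T^n) = R_T^{-1}(T^n)$. Now apply an arbitrary rotation $R$. By Lemma~\ref{lemma2.3}, $\mathcal{C}(R(T^n)) = R(T^2)$, whose matrix is $M(R) M(T^2) M(R)^t = M(R) R_T^{-1} D R_T M(R)^t = (R_T M(R)^t)^{-1} D (R_T M(R)^t)$. Hence the diagonalizing matrix for $R(T^n)$'s contracted tensor is $R_T M(R)^t$ (up to the ambiguity discussed below), and the standard position is $(R_T M(R)^t)^{-1}(R(T^n)) = (M(R) R_T^{-1})(R(T^n)) = R_T^{-1}(R^{-1}(R(T^n))) = R_T^{-1}(T^n) = S(T^n)$, using that rotation of tensors is a group action and $M(R)^{-1} = M(R)^t$. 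The odd case runs the same way: the three vectors obtained by contracting $T^3$ each transform by $M(R)$ (Remark~\ref{lemma2.2} together with Lemma~\ref{lemma2.3}), so the assembled $T^2 = (V_1,V_2,V_3)$ transforms as $M(R) M(T^2)$; then if $T^2 = R_T U$ is the QR factorization with $R_T$ orthogonal and $U$ upper-triangular, $M(R) T^2 = (M(R) R_T) U$ is a QR factorization of the rotated matrix, so the extracted rotation becomes $M(R) R_T$ and the standard position is $(M(R) R_T)^{-1}(R(T^n)) = R_T^{-1}(T^n) = S(T^n)$.

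The main obstacle is the \emph{non-uniqueness} of these factorizations, which is exactly what must be controlled for the claimed equalities to hold on the nose rather than up to a symmetry. For diagonalization: eigenvectors are only determined up to sign (and up to arbitrary rotation within a degenerate eigenspace), so $R_T$ is not canonical; one must either assume generic/distinct eigenvalues and fix a sign and ordering convention (e.g. order eigenvalues decreasingly, choose each eigenvector's first nonzero coordinate positive) and check that this convention is preserved under the substitution $R_T \mapsto R_T M(R)^t$, or else argue that the residual ambiguity lies in the stabilizer of $D$ and therefore does not affect $R_T^{-1}(T^n)$. For QR: one fixes signs of the diagonal of $U$ (the standard convention making QR unique for invertible matrices). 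I would state these conventions explicitly as part of the definition of $S$ and then verify the compatibility; I expect the degenerate-eigenvalue case (repeated eigenvalues, or contracted tensor not of full rank) to be the genuinely delicate point, and I would either restrict the statement to tensors whose contraction is regular or handle the stabilizer argument carefully. A secondary, more bookkeeping-level point is to confirm that $S(R(T)) = S(T)$ and not merely $S(R(T)) = R'(S(T))$ for some $R'$ — i.e. that the algorithm genuinely lands in the orbit representative — which the computation above gives once the factorization conventions are pinned down.
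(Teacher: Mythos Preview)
Your proposal is correct and follows essentially the same route as the paper's own proof: split into even/odd cases, use Lemma~\ref{lemma2.3} to push the rotation through the contractions, and then read off how the diagonalizing matrix (even case) or the $Q$-factor (odd case) changes under the rotation so that the inverse cancels it. In fact you are more careful than the paper on one point: the paper simply asserts ``since QR-factorization is unique'' and never discusses the eigen-decomposition ambiguity at all, whereas your explicit sign/ordering conventions and the degenerate-eigenvalue caveat are exactly what is needed to make the argument rigorous.
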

We provide a proof in Appendix \ref{proof:thm2}. 
We call $S(T)$ the standard position of $S(T)$

Using Theorem \ref{thm:1} we can automatically have the result on tensor with arbitrary order that the standard position, derived by position standardization algorithm, is rotation invariant. 

\begin{thm}
RotEqNet, $M_R$, is rotation-equivariant, \emph{i.e.} for all rotation $R \in SO(n)$ and 
tensor $T \in \otimes^k V$
\begin{equation}
    M_R(R(T)) = R(M_R(T))
\end{equation}
\label{thm:2}
\end{thm}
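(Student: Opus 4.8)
The plan is to derive Theorem \ref{thm:2} as a direct corollary of Theorem \ref{thm:1}, by unwinding the definition of RotEqNet. Recall that RotEqNet acts on an input tensor $T$ by first computing its standard position and the extracting rotation, then applying the learned kernel predictor $M^\theta$ to the standardized input, and finally ``resetting'' the position by applying the inverse of the extracted rotation. Concretely, if $S(T) = R_T^{-1}(T)$ is the standard position (where $R_T$ is the rotation produced by diagonalization/QR on the contracted tensor), then
\begin{equation}
    M_R(T) = R_T\bigl(M^\theta(S(T))\bigr) = R_T\bigl(M^\theta(R_T^{-1}(T))\bigr).
\end{equation}
So the whole content of the theorem reduces to understanding how $R_T$ and $S(T)$ transform when $T$ is replaced by $R(T)$ for an arbitrary $R \in SO(n)$.

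First I would show the key transformation law for the extracted rotation: if we feed $R(T)$ into the algorithm, the rotation it extracts is $R_{R(T)} = R\, R_T$ (up to the same sign/ordering ambiguities that Theorem \ref{thm:1} must already be handling). This follows because the contraction step commutes with rotation by Lemma \ref{lemma2.3}, so the second-order (or third-order, in the odd case) tensor obtained from $R(T)$ is exactly $R$ applied to the one obtained from $T$; and then diagonalization of $R(M(T^2))$ — which by Lemma \ref{lemma2.1} is $M(R)\,M(T^2)\,M(R)^t$ — has eigenvector matrix $R\,R_T$. I would lean on the fact that Theorem \ref{thm:1} already establishes $S(R(T)) = S(T)$, i.e. $R_{R(T)}^{-1}(R(T)) = R_T^{-1}(T)$, which is exactly the statement $R_{R(T)}^{-1} \circ R = R_T^{-1}$ as operators applied to $T$, hence $R_{R(T)} = R \circ R_T$.

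Given that, the computation is immediate:
\begin{align}
    M_R(R(T)) &= R_{R(T)}\bigl(M^\theta(S(R(T)))\bigr) \\
    &= (R\, R_T)\bigl(M^\theta(S(T))\bigr) \\
    &= R\bigl(R_T(M^\theta(S(T)))\bigr) \\
    &= R\bigl(M_R(T)\bigr),
\end{align}
where the second line uses both Theorem \ref{thm:1} ($S(R(T)) = S(T)$) and the transformation law $R_{R(T)} = R\,R_T$, and the third line uses that the induced action $\otimes^k(\cdot)$ is functorial, i.e. $(R\,R_T)(\cdot) = R(R_T(\cdot))$ on $\otimes^k V$ (a consequence of the definition of the induced map on basis elements).

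The main obstacle, as I see it, is not the formal chain above but justifying the transformation law $R_{R(T)} = R\,R_T$ cleanly in the presence of the non-uniqueness of eigenvector matrices (sign flips, ordering of eigenvalues, degenerate eigenvalues) and, in the odd case, the non-uniqueness in QR factorization — these are precisely the subtleties that Theorem \ref{thm:1} must pin down with a canonical choice. If Theorem \ref{thm:1}'s proof fixes a deterministic convention for $R_T$ (e.g. sorted eigenvalues, fixed sign conventions, Gram--Schmidt orientation), then that same convention forces $R_{R(T)} = R\,R_T$ and everything goes through; otherwise the equivariance statement is only true modulo stabilizer elements that fix the standard position, and one has to argue $R_T(M^\theta(\cdot))$ is independent of that residual choice. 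I would therefore structure the proof to explicitly invoke the canonical-choice lemma underlying Theorem \ref{thm:1} and then present the four-line display above as the proof proper.
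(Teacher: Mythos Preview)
Your proposal is correct and follows essentially the same route as the paper's proof: both invoke Theorem~\ref{thm:1} to get $S(R(T)) = S(T)$, deduce that the extracted rotation for $R(T)$ is $R$ composed with the extracted rotation for $T$, and then conclude by a direct four-line computation unwinding the definition of $M_R$. The paper writes the standardizing rotation as $P_1$ (so $S(X) = P_1(X)$, $M_R(X) = P_1^{-1}(M_k(S(X)))$) and identifies the standardizing rotation for $P_2(X)$ as $P_1 P_2^{-1}$ via $S(P_2(X)) = P_1(X) = (P_1 P_2^{-1})(P_2(X))$, which is exactly your transformation law $R_{R(T)} = R\,R_T$ in inverse notation. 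Your added discussion of the eigenvector/QR non-uniqueness issue is a genuine subtlety that the paper's proof passes over silently, but it does not change the structure of the argument.
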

\begin{proof}
Name RotEqNet as $M_{R}$, kernel classifier as $M_k$. Consider a input pair $(X, y) \in \mathcal{D}$. Suppose the result of standardize position algorithm $S$ would have $S(X)=P_1(X)$, where $P_1$ denote a rotation operation. 

First, consider $M_R(X)$ the process of RotEqNet is defined as:
\begin{equation}
    M_R(X)=P_1^{-1}( M_k(S(X)))=P_1^{-1}(M_k(P_1(X)))
\end{equation}
Consider another rotation operation $P_2$ in the matrix form acting on $X$, using Theorem. \ref{thm:1} we know that:
\begin{equation}
    S(P_2(X))=S(X)=P_1(X)=(P_1 I) (X)=(P_1 \times P_2^{-1})(P_2(X)),
    \label{thm4:1}
\end{equation}
where $I$ is an identity matrix. 

Then, consider $M_R(P_2(X))$ the process of RotEqNet is defined as:
\begin{equation}
    M_R(P_2(X))=(P_1 P_2^{-1})^{-1}(M_k(S(P_2(X))))=(P_2P_1^{-1}) (M_k(S(P_2 (X))))
\end{equation}
We know that $S(P_2(X))=S(X)$ from Equ. \ref{thm4:1}. Therefore, we have $M_R(S(X))=M_R(S(P_2(X)))$. Substitute $M_R(X)$ back into previous equation, 
\begin{equation}
    M_R(P_2(X))=P_2P_1^{-1}(M_k(S(X)))=P_2(M_R(X))
\end{equation}
To simplify, for rotation operation $P_2$ on input $X$, we have
\begin{equation}
    M_R(P_2(X))=P_2(M_R(X))
\end{equation}
This is showing that $M_R$ is rotation-equivariant by definition. Therefore, RotEqNet is rotation-equivariant. 
\end{proof}
it has shown that \textit{Algorithm 1} is able to preserve rotation information for low dimension, and further extract using diagonalization for matrices. This part is a theoretical guarantee of our position standardization algoriTheorem

\section{Case studies}
\label{sec:expr}

In this section, a series of cases are provided to show the performance of RotEqNet. In the following subsections, cases are included from second-order, third-order, and fourth-order. We specifically investigate second-order cases with detailed studies on linear, and nonlinear test equations since, in current applications, second-order physical systems are widely used. Generally, we reported two properties of RotEqNet in every case study. 
The first property is a loss reduction property. We apply RotEqNet in each test physical equation and compared it to the baseline models (Neural Networks and Random Forests). Another one is the rotation-invariant property. We examine this property by letting RotEqNet and baseline models to perform prediction on rotations of randomly selected data. 
We report detailed information for these case studies in every subsection below. The interpretation of experimental results is also included in each subsection.

\subsection{Case study from Newtonian fluid: a second-order linear case} \label{order2linear}
\subsubsection{Problem statement}
Newtonian fluid is a type of fluid such that its viscous stress changes based on its flow. In this experiment, we aim to use simulation data to demonstrate this rule of Newtonian fluid. This would serve as a case study with second-order linear equations. 
Let $\sigma \in \mathbb{R}^{3\times 3}$ be stress tensor, 
$p \in \mathbb{R}$ pressure  and $S\in \mathbb{R}^{3\times3}$  strain rate.
The rule of Newtonian fluid is an second-order physical equation which satisfies the following condition \citep{batchelor2000introduction}: 
\begin{equation}
        \sigma = -pI+\mu S \label{newtonianFluidEqu1}
\end{equation}
Another definition of the equation for Newtonian fluid would use the velocity vector field, defined as $\nabla v$. $\nabla v$ could be expressed as a $3\times 3$ matrix. Using this definition, the equation of Newtonian fluid could also be written as: 
\begin{equation}
        \sigma = -pI+\mu (\nabla v + \nabla v^T) \label{newtonianFluidEqu2}
\end{equation}

This could also be considered as the definition of strain rate Based on this definition, we could observe that $S=\nabla v + \nabla v^T$, and it is symmetric since $S=S^T$. Since $S$ is symmetric and $I$ is an identity matrix, $\sigma$ is also symmetric. Therefore, defining an arbitrary rotation matrix $R$, this system is equipped with the property of rotation-equivariant that $R(\sigma) = R(-pI+\mu S)$. 

To quantify the stress for Newtonian fluid simulation, it would be useful to be able to predict the Newtonian fluid stress, given the simulation of pressure and velocity vector field. Based on this scenario, in this subsection, we provide a case study for the machine learning model on inputting the shear of Newtonian fluid and prediction of the stress. 
\subsubsection{Data generation and model description} \label{Expr1DGMD}
Based on Eqn.\ref{newtonianFluidEqu2}, we first generate random data to obtain $\nabla v$ and $p$. The generation of random numbers in $\nabla v$ follows a normal distribution from range $(0,1)$. Derived from generated $\nabla v$ and $p$, we could obtain $\sigma$ from Eqn.\ref{newtonianFluidEqu2}. Denote the dataset as  $D=\{x_i,y_i\}_{i=1}^{N}$. To form a proper dataset $D$ with $N$ elements for a machine learning model for Newtonian fluid, the input $x$ is set up to be a vector where $x \in \mathbb{R}^{10}$. Specifically, $x$ is composed by $p$ and flattened $S$ in Eqn.\ref{newtonianFluidEqu1}. The output $y\in \mathbb{R}^{9}$ is a vector which is the flattened result of matrix $\sigma$ derived by $p$ and $S$. The dataset $D$ would set up in the description above. To compare the difference of our method to the baseline method, we trained two models with the same hyper-parameter using different amounts of training data, ranging from $10,000, 20,000, ..., 100,000$. $85\%$ of generated data is used for training and $15\%$ of data is used for testing. A rotation set with 10,000 random rotation matrices is also generated for evaluating the property of rotation-equivariant, denoted by $\{R_i\}_{i=1}^{10000}$. 

The machine learning model we apply here is neural networks and random forests because of the ability of these two models to approximate arbitrary functions. For Neural Networks, in our implementation, the logistic activation function is used as an activation function for every neuron. The number of neurons for two layers is 512 and 4, respectively. Adam optimizer \citep{kingma2014adam} is applied as the algorithm for optimization, and the learning is set up to be $1\times 10^{-3}$. We also set the batch size to be 64. For random forests, 100 estimators are set up with mean squared error as the criterion. The maximum depth of random forests is 3 to lower the chance of overfitting. We used Sklearn for implementation \citep{pedregosa2011scikit}. The computation environment of this experiment is CPU. 

\subsubsection{Results}
\label{expr1:result}
There are two properties to evaluate, including error reduction and rotation-equivariant of RotEqNet. The effect of error reduction is evaluated for the first. A kernel predictor is trained by standard positions derived from training data. Then, the prediction algorithm is applied to both training and testing set to obtain the training and testing performances. The validation error $E$ is defined as the Mean Squared Loss using the formulation that: 

\begin{equation}
    E=\frac{\sum_{i=1}^{N} (y_i-M^{\theta}(X_i))^2}{N} \label{Error}
\end{equation}
In Eqn. \ref{Error}, $N$ is the number of data in  dataset $D$, $M$ is the trained machine learning model, $\theta$ is the derived parameter from model $M$, and $(X_i, y_i)\in D$ describes input-label pair of the dataset. This evaluation $E$ represents the expected error of model $M$ with dataset $D$. 

\begin{figure}
    \centering
    \includegraphics[scale=1.4]{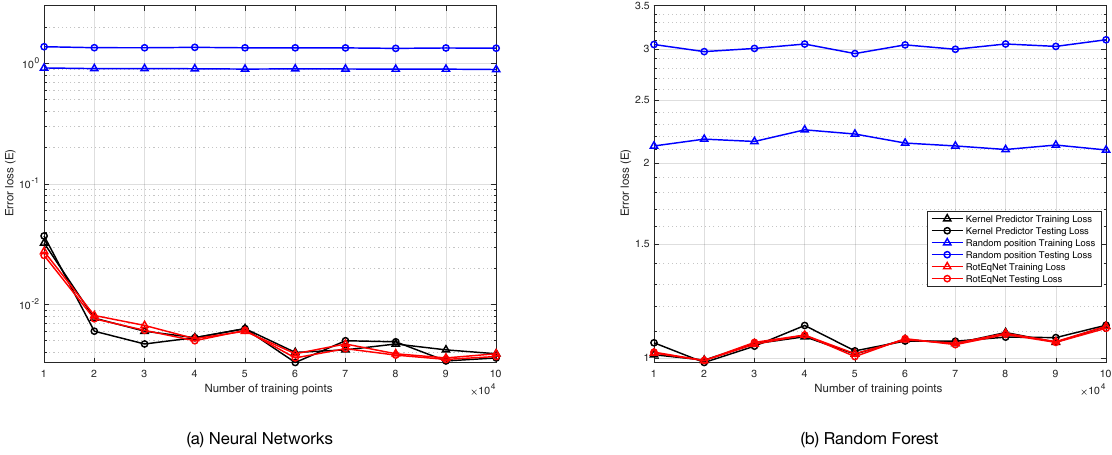}
    \caption{Error of training with baseline model with random position, RotEqNet, and kernel predictor with the standard position for (a) Neural Networks and (b) Random Forests in the case study of Newtonian Fluid. Different colors represent different experimental groups. The RotEqNet model is trained with random positions and tested with random positions (red curves). Baseline models that trained and tested on raw data are shown as blue curves. The performances of kernel predictors that trained and tested with only standard positions are also shown as black curves. Training errors are shown with lines marked with triangles, testing errors are shown with lines marked with circles. }
    \label{fig:expr1}
\end{figure}

\begin{table} 
\centering
\begin{tabular}{ |p{2.5cm}|p{4.0cm}|p{4.0cm}| }
 \hline
 Kernel predictor& Training Error Reduction &Testing Error Reduction\\
 \hline
 Neural Networks   & 99.56\%    &99.60\%\\
  \hline
 Random Forests &   99.56\%  & 99.72\% \\

 \hline
\end{tabular}
\label{tab1:expr1}
\caption{Evaluation of error reduction for RotEqNet with different kernel predictor.}
\end{table}

Fig. \ref{fig:expr1}(a) shows the error reduction property of RotEqNet. This plot consists of three groups of experimental groups. 
The first experiment group focuses on the accuracy of the baseline model, a single feed-forward Neural Network, on raw data with random rotated positions. As shown in Fig. \ref{fig:expr1}(a) with blue curves, triangle curve represents training error and circle curve represents testing error. The second experiment group is RotEqNet with Neural Network as the kernel predictor. As shown in Fig. \ref{fig:expr1}(a) in red curves, triangle curve represents training error, and the circle curve represents testing error. For 100,000 training samples, the testing error of RotEqNet is 0.0037, and the testing error of the baseline method is 1.333. We could observe a huge error reduction, 99.56\% in training, and 99.60\% in testing, for RotEqNet compared to the error of using the baseline model.
For the last experiment group, as performances marked as black curves in the figure, it reports the performance of kernel predictor with standard position only. This experiment would explain why RotEqNet would improve performance since training with standard positions would be an easier task compared to raw data. 

Further, Fig. \ref{fig:expr1}(b) shows the error reduction effect of RotEqNet using Random Forest as a kernel predictor. Similarly, as shown in Fig. \ref{fig:expr1}(b) with blue curves, it represents the performance of the baseline method (Random Forests). The second experiment group is RotEqNet with Random Forests as the kernel predictor. As shown in Fig. \ref{fig:expr1}(b) in red curves, triangle curve represents training error and the circle curve represents testing error. We could observe a huge error reduction, 99.56\% in training and 99.72\% in testing, for RotEqNet compared to the error of using only the Random Forest predictor.
For the last experiment group, as performances marked as black curves in the figure, trains the kernel predictor with standard position only. As stated before, this could also serve as a reason for the error reduction effect for RotEqNet on random forests. 

According to the reported results, RotEqNet has a good generalization result without overfitting. For cases training with raw data for baseline models, the testing error is typically higher compared to training error. For example, the difference is training and testing errors are 0.44 for Neural Networks, $1.01$ for Random forests when $N=100,000$. This represents that for both Neural Networks and Random Forests would be easy to overfit this task on Newtonian Fluid. By contrast, RotEqNet would help to reduce this difference in training and testing error. As we could observe from the training and testing error of RotEqNet, the errors are much lower. When $N=100,000$, there are only $0.0002$ for Neural Networks and $0.0078$ for Random Forests. In the case of linear second-order equations, the application of RotEqNet would result in better-generalized results in learning. 

Another important property to evaluate for RotEqNet is rotation-equivariant. The experiment is designed on the definition of rotation-equivariant mentioned in Eqn. \ref{def:rotEq}. First, we pick a randomly generated data $(X_0, y_0)$. Then we apply the rotation set $\{R_i\}_{i=1}^{10000}$ with 10,000 random rotation matrices to $(X_0, y_0)$. To fully investigate the property of rotation-equivariant, we apply an error evaluation method $E_D$ here to evaluate the error compared to real data, which is defined as: 
\begin{equation}
       E_D=\frac{\sum_{i=1}^{N} [(M^{\theta}(R_i(X_0))-R_i(y_0)]^2}{N}
    \label{ED}
\end{equation}

\begin{table} 
\centering
\begin{tabular}{ |p{1.3cm}|p{2.4cm}|p{2.4cm}|p{2.4cm}|p{2.4cm}| }
 \hline
 Model& Baseline (NN)& RotEqNet (NN)&Baseline (RF)&RotEqNet (RF)\\
 \hline
  $E_D$ & 0.6362 & \textbf{0.0013} &3.1334 &  1.5513\\

 \hline
\end{tabular}
\caption{Evaluation of Rotation-equivariant property between baseline model and RotEqNet.}
\label{tableExpr1}
\end{table}

This error evaluation method ($E_D$) focuses more on the model's error on real data for all rotations. As shown in Tab. \ref{tableExpr1}, for both baseline methods, using neural networks and random forests, there are large errors for $E_M$ and $E_D$. The baseline methods have no theoretical guarantee that it has the property of rotation-equivariant. However, there is an error reduction for both machine learning models when applying with RotEqNet's architecture. Especially for RotEqNet with Neural Networks as kernel predictor, we could observe that $E_D=0.0013$ with $99.8\%$ of error reduction. This could guarantee the rotation-equivariant property of RotEqNet.

\subsection {Case study from large eddy simulation: a second-order nonlinear case} \label{order2_nonlinear}
\subsubsection{Problem statement}
In this case, we consider the subgrid model of large eddy simulation (LES) of turbulent flow by Kosovic \cite{kosovic1997subgrid}. In this case study, as formulated previously in \citep{pitsch2006large,matai2018flow}, we hope to obtain a learned model by simulation data from LES. This would serve as a case study with second-order non-linear equations. LES is defined as: 
\begin{equation} 
       \tau_{ij} = -(C_s \Delta)^2\left\{ 2\sqrt{2S_{mn}S_{mn}}S_{ij} + C_1\left( S_{ik}S_{kj}-\frac{1}{3}S_{mn}S_{mn}\delta_{ij}\right)+C_2\left( S_{ik}\Omega_{kj} - \Omega_{ik}S_{kj} \right)\right\} \label{LESEqu}
\end{equation}
 Here $\tau_{ij}$ is subgrid stress, which is a symmetric traceless 2nd order tensor. $S_{ij}$ and $\Omega_{ij}$ are symmetric and anti-symmetric parts of velocity gradient tensor $G_{ij}$, where $\mathrm{Tr}(G) = 0$. Further, $C_s$, $\Delta$, $C_1$, $C_2$ are all constants. The configuration of constants above are reported in the next subsection. 

In order to qualify the subgrid stress for LES, this study aims to predict the subgrid stress, given the simulation of velocity gradient tensor. This case study for the machine learning model on inputting the velocity gradient tensor. 
\subsubsection{Data generation and model description}
Based on Eqn. \ref{LESEqu}, we first generate random data to obtain a simulated velocity gradient tensor $G_{ij}$. The generation of random numbers follows a normal distribution from range $(0,1)$, and $G_{ij}$ is obtained from a random matrix $G_{raw}$ by subtracting $\frac{1}{3}\mathrm{Tr}(G_{raw})$ from diagonal position. This would keep $\mathrm{Tr}(G)=0$. $S_{ij}$ and $\Omega_{ij}$ could be obtained from $G_{ij}$ by getting its symmetric and anti-symmetric parts. For the setup of constants, $C_s=0.4,\Delta=0.4,C_1=C_2=1.0$. $tau_{ij}$ is computed from the above setting with Eqn. \ref{order2_nonlinear}. Denote the dataset as, $D=\{x_i,y_i\}_{i=1}^{N}$. To form a proper dataset $D$ with $N$ elements for a machine learning model for Newtonian fluid, the input $x$ is set up to be a vector where $x \in \mathbb{R}^{9}$. Specifically, $x$ is composed by flattened $G_{ij}$. The output $y\in \mathbb{R}^{9}$ is a vector, which is the flattened result of matrix $\tau$ derived by $G$ and other constants. To compare the difference of our method to the baseline method, we trained two models with the same hyper-parameter using different amounts of training data, ranging from $10,000, 20,000, ..., 100,000$. $85\%$ of generated data is used for training, and $15\%$ of data is used for testing. A rotation set with 10,000 random rotation matrices is also generated for evaluating the property of rotation-equivariant, denoted by $\{R_i\}_{i=1}^{10000}$. The model setup is the same compared to Sec. \ref{Expr1DGMD}. 

\subsubsection{Results}
The effect of error reduction is evaluated for the first. The validation error $E$ is defined as the Mean Squared Loss using the formulation in Eqn. \ref{Error}. This evaluation $E$ represents the expected error of model $M$ with dataset $D$.

\begin{figure}
    \centering
    \includegraphics[scale=1.4]{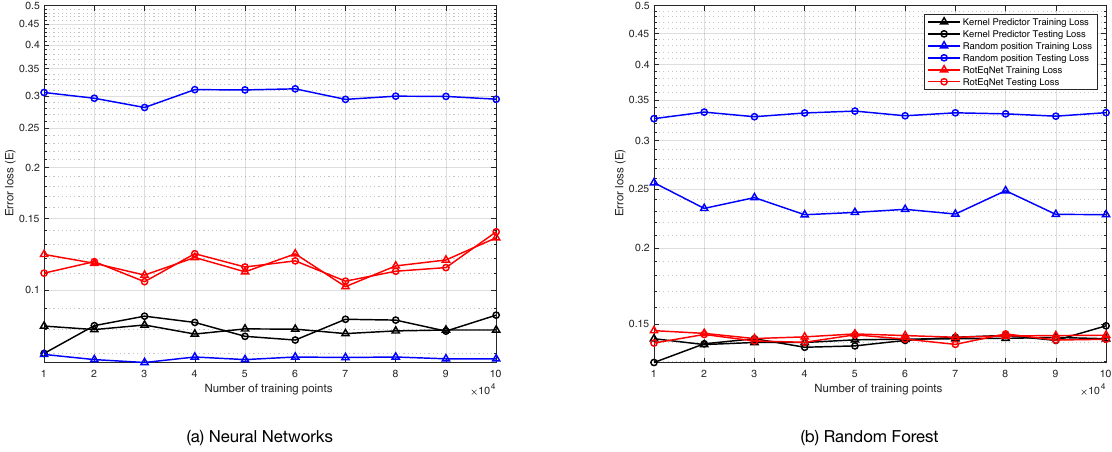}
    \caption{Error of training with baseline model with random position, RotEqNet, and kernel predictor with the standard position for (a) Neural Networks and (b) Random Forests in the case study of large eddy simulation. Different colors represent different experimental groups. The RotEqNet model is trained with random positions and tested with random positions (red curves). Baseline models that trained and tested on raw data are shown as blue curves. The performances of kernel predictors that trained and tested with only standard positions are also shown as black curves. Training errors are shown with lines marked with triangles, testing errors are shown with lines marked with circles. }
    \label{fig:expr2}
\end{figure}

Fig. \ref{fig:expr2}(a) shows the error reduction effect of RotEqNet with Neural Network as a kernel predictor for second-order nonlinear cases with three groups of experimental groups. 
The first experiment group focuses on the accuracy of the baseline method on raw data with random rotated positions. As shown in Fig. \ref{fig:expr2} with blue curves, triangle curve represents training error and the circle curve represents testing error. The second experiment group is RotEqNet, with Neural Network as a kernel predictor. As shown in Fig. \ref{fig:expr2}(a) in red curves. For 100,000 training samples, the testing error of RotEqNet is 0.1391, and the testing error of the baseline method is 0.2946, with 52.77\% of error reduction.
The performances of the last experiment group are marked as black curves in the figure, with only standard position trained for kernel predictor.

Based on the experimental results, firstly, RotEqNet could reach a better learning performance compared to simply applying Neural Networks (baseline method). Training with standard positions could lower the training difficulty, and therefore RotEqNet could obtain better performance. 
Further, Fig. \ref{fig:expr2}(b) shows the error reduction effect of RotEqNet using Random Forest as a kernel predictor. The general performance of using Random Forests as a kernel predictor is relatively worse compared to using Neural Networks as a kernel predictor. In Fig. \ref{fig:expr2}(b), blue curves represent the performance of training with raw data by Random Forests (baseline method); red curves represent the performance of RotEqNet; black curves represent the performance of kernel predictor trained with standard positions. We could observe an error reduction for 36.63\% in training and 57.58\% in testing for RotEqNet with Random Forests. 

Moreover, RotEqNet has a good generalization result without overfitting. Applying raw data in learning directly on baseline models, the testing error is much higher compared to the training error. For example, the difference is training and testing errors are $0.0068$ for Neural Networks, $0.1068$ for Random forests when $N=100,000$. It is also observable in Fig. \ref{fig:expr2}(a) that the training error of the baseline model with raw data is the lowest, while the testing error of the baseline model is the highest. In this case study, Neural Networks are worse for the sake of overfitting compared to Random Forests. By contrast, introducing the architecture RotEqNet would help to reduce this difference in training and testing error. As we could observe from the training and testing error of RotEqNet, the errors are much lower. When $N=100,000$, there are only $0.0046$ for Neural Networks and $0.0022$ for Random Forests. In this case study of LES, the application of RotEqNet would result in better-generalized results in learning. 

\begin{table} 
\centering
\begin{tabular}{ |p{2.5cm}|p{4.0cm}|p{4.0cm}| }
 \hline
 Kernel predictor& Training Error Reduction &Testing Error Reduction\\
 \hline
 Neural Networks   & -98.44\%    &52.77\%\\
  \hline
 Random Forests &   36.63\%  & 57.58\% \\

 \hline
\end{tabular}
\caption{Evaluation of error reduction for RotEqNet with different kernel predictor.}
\label{tab1:expr2}
\end{table}

To evaluate the rotation-equivariant property of RotEqNet for second-order nonlinear cases, our experimental process is close to the one stated in Sec. \ref{expr1:result}. First, we pick a randomly generated data $(X_0, y_0)$. Then we apply the rotation set $\{R_i\}_{i=1}^{10000}$ with 10,000 random rotation matrices to $(X_0, y_0)$. 
This error evaluation method ($E_D$), as defined in Eqn. \ref{ED}, focuses more on the model's error on real data for all rotations. As shown in Tab. \ref{tableExpr2}, for both baseline methods, using neural networks and random forests, there are large error for $E_D$. The baseline methods have no theoretical way to guarantee that it has the property of rotation-equivariant. However, there is an error reduction for both machine learning models when applying with RotEqNet's architecture. Especially for RotEqNet with Neural Networks as kernel predictor, it is observable that $E_D=0.0025$ with $73.55\%$ error reduction. This could guarantee the rotation-equivariant property of RotEqNet for nonlinear second-order cases.
\begin{table} 
\centering
\begin{tabular}{ |p{1.3cm}|p{2.4cm}|p{2.4cm}|p{2.4cm}|p{2.4cm}| }
 \hline
 Model& Baseline (NN)&RotEqNet (NN)&Baseline (RF)& RotEqNet (RF)\\
 \hline
  $E_D$ & 0.0945 & \textbf{0.0025} &0.1912 &  0.0084\\

 \hline
\end{tabular}
\caption{Evaluation of Rotation-equivariant property between baseline model and RotEqNet.}
\label{tableExpr2}
\end{table}

\subsection{Case study from testing Newtonian Fluid equation: a third-order case} \label{order3}
\subsubsection{Problem statement}
In this section, we study the performance of RotEqNet for tensor with odd order. In this case, we specifically set a third-order test equation. We used a test equation here revised from Newtonian fluid equation from Eqn. \ref{newtonianFluidEqu2}. We name this testing equation as 'testing Newtonian fluid equation' for simplicity. The testing equation revised from Newtonian fluid equation can be described as: 
\begin{equation}
       \sigma = -pI+\mu (\nabla v + \nabla v^T) \label{newtonianFluidEqu3},
\end{equation}
where $\sigma \in \mathbb{R}^{3\times 3\times 3}$ is testing stress, $p \in \mathbb{R}$ is testing pressure, and $v\in \mathbb{R}^{3\times 3\times 3}$ is testing velocity field. $I\in \mathbb{R}^{3\times 3\times 3}$ is the identity third-order tensor. 

Based on this testing equation, we could observe that $(\nabla v + \nabla v^T)^T=\nabla v + \nabla v^T$. Since $\nabla v + \nabla v^T$ is symmetric, and $I$ is symmetric, we have $\sigma$ is also symmetric. Therefore, defining an arbitrary rotation matrix $R$, this system is equipped with the property of rotation-equivariant that $R(\sigma) = R(-pI+\mu (\nabla v + \nabla v^T))$. 

In order to qualify stress for testing the Newtonian fluid equation, this study aims to predict the stress, given the simulation of pressure and velocity gradient tensor. This case study for the machine learning model on inputting the pressure and velocity gradient tensor. 
\subsubsection{Data generation and model description}
Based on Eqn. \ref{newtonianFluidEqu3}, we first generate random data to obtain $\nabla v$ and $p$. The generation of random numbers in $\nabla v$ follows a normal distribution from range $(0,1)$. $\sigma$ could be obtained using the Eqn.\ref{newtonianFluidEqu3}, derived from generated $\nabla v$ and $p$. Denote the dataset as , $D=\{x_i,y_i\}_{i=1}^{N}$. To form a proper dataset $D$ with $N$ elements for a machine learning model for Newtonian fluid, the input $x$ is set up to be a vector where $x \in \mathbb{R}^{28}$. Specifically, $x$ is composed by $p$ and flattened $(\nabla v + \nabla v^T)$ in Eqn.\ref{newtonianFluidEqu3}. The output $y\in \mathbb{R}^{27}$ is a vector which is the flattened result of matrix $\sigma$. The dataset $D$ would set up in the description above. To compare the difference of our method to the baseline method, we trained two models with the same hyper-parameter using different amounts of training data, ranging from $10,000, 20,000, ..., 100,000$. $85\%$ of generated data is used for training and $15\%$ of data is used for testing. A rotation set with 10,000 random rotation matrices is also generated for evaluating the property of rotation-equivariant, denoted by $\{R_i\}_{i=1}^{10000}$. The model setup is the same as Sec. \ref{Expr1DGMD}.

\subsubsection{Results}
Fig. \ref{fig:expr3}(a) shows the error reduction effect of RotEqNet with Neural Network as a kernel predictor for third-order cases with three groups of experimental groups. 
The first experiment group focuses on the accuracy of the baseline model (Neural Network) on raw data with random rotated positions as shown in Fig. \ref{fig:expr3}(a) with blue curves. The second experiment group is RotEqNet, with Neural Network as kernel predictor as shown in Fig. \ref{fig:expr3}(a) in red curves. For 100,000 training samples, the testing error of RotEqNet is 1.8759 and the testing error of baseline method is 2.2232 with 15.62\% of error reduction.
The performances of the last experiment group are marked as black curves in the figure, with only standard position trained for kernel predictor.

Based on the experimental results, for the third-order testing equation, RotEqNet could reach a better learning performance compared to the baseline method. Training with RotEqNet could lower the training difficulty, and therefore RotEqNet could obtain better performance. Moreover, RotEqNet has good generalization capability without overfitting. As shown in the blue curves of Fig. \ref{fig:expr3}, if we apply raw data in learning directly on baseline models, the testing error is much higher compared to the training error. In this case study, introducing the architecture RotEqNet would help to reduce this difference in training and testing error. As we could observe from the training and testing error of RotEqNet, the errors are much lower. When $N=100,000$, there are only $0.0046$ for Neural Networks and $0.0022$ for Random Forests. In this case study of testing the Newtonian fluid equation, the application of RotEqNet would result in better-generalized results in learning. 

Further, Fig. \ref{fig:expr3}(b) shows the error reduction effect of RotEqNet using Random Forest as a kernel predictor. The general performance of using Random Forests as a kernel predictor is relatively worse compared to using Neural Networks as a kernel predictor. In Fig. \ref{fig:expr2}(b), blue curves represent the performance of training with raw data by Random Forests (baseline method); red curves represent the performance of RotEqNet; black curves represent the performance of Random Forest trained with standard positions. For the first point, we could observe an error reduction for 0.90\% in training and 6.84\% in testing for RotEqNet with Random Forests. As another point, RotEqNet is also obtaining a better-learned model for the model's capability in generalization. The testing error of the baseline method is observably higher than testing error, while the training and testing performance of RotEqNet is approximately the same. As suggested in Fig. \ref{fig:expr3}(a), in second-order nonlinear cases, RotEqNet could reach a generalized learning result with remarkably lower error compared to baseline methods. 

\begin{figure}
    \centering
    \includegraphics[scale=1.4]{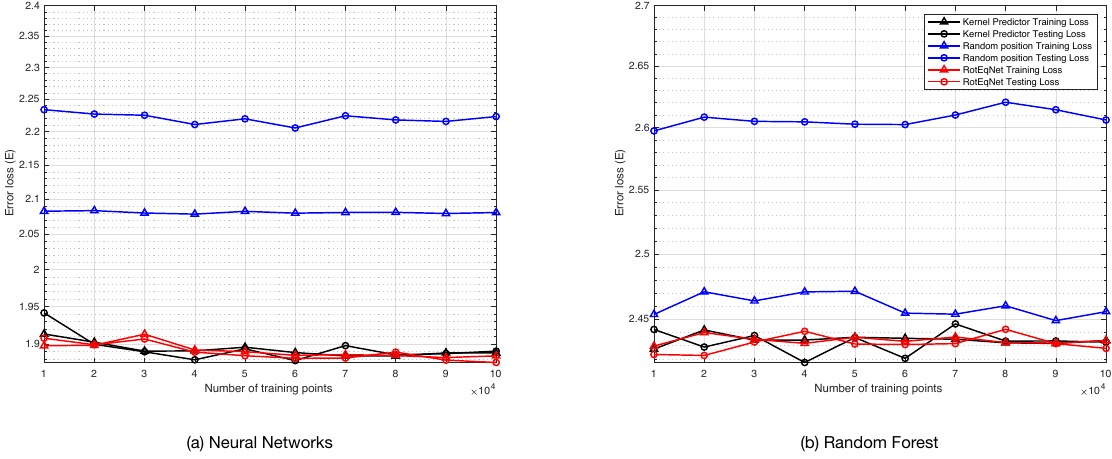}
    \caption{Error of training with baseline model with random position, RotEqNet, and kernel predictor with the standard position for (a) Neural Networks and (b) Random Forests in the case study of testing Newtonian Fluid equation. Different colors represent different experimental groups. The RotEqNet model is trained with random positions and tested with random positions (red curves). Baseline models that trained and tested on raw data are shown as blue curves. The performances of kernel predictors that trained and tested with only standard positions are also shown as black curves. Training errors are shown with lines marked with triangles, testing errors are shown with lines marked with circles. }
    \label{fig:expr3}
\end{figure}

\begin{table} 
\centering
\begin{tabular}{ |p{2.5cm}|p{4.0cm}|p{4.0cm}| }
 \hline
 Kernel predictor& Training Error Reduction &Testing Error Reduction\\
 \hline
 Neural Networks   & 9.42\%    &15.62\%\\
  \hline
 Random Forests &   0.90\%  & 6.84\% \\

 \hline
\end{tabular}
\caption{Evaluation of error reduction for RotEqNet with different kernel predictor.}
\label{tab1:expr3}
\end{table}

\begin{table} 
\centering
\begin{tabular}{ |p{1.3cm}|p{2.4cm}|p{2.4cm}|p{2.4cm}|p{2.4cm}| }
 \hline
 Model& Baseline (NN)&RotEqNet (NN)&Baseline (RF)& RotEqNet (RF)\\
 \hline
  $E_D$ & 2.8454 &\textbf{2.6992} & 3.0788 &  3.1068\\

 \hline
\end{tabular}
\caption{Evaluation of Rotation-equivariant property between baseline model and RotEqNet.}
\label{tableExpr3}
\end{table}

To evaluate the rotation-equivariant property of RotEqNet for this third-order case, we designed an experimental process that is close to the one stated in Sec. \ref{expr1:result}. As shown in Tab. \ref{tableExpr3}, for baseline method using neural networks, the error is relatively large for $E_D$ compared to RotEqNet. In our experiment, we reached an error reduction of $0.1462$. We would further discuss this result in Section \ref{discussion}.

\subsection{Case study from Electrostriction: a fourth-order case}
\subsubsection{Problem statement}
This case study focuses on a linear relationship of fourth-order tensor. Nye \cite{nye1985physical} has introduced a fourth-order tensor in modeling elastic compliances and stiffnesses, which has been investigated using machine learning methods \citep{yang2019predicting,liu2019deep}. Generally, in the study of the properties of a crystalline and anisotropic elastic medium, a fourth-order tensor coefficient will typically be applied to model the relationship between two symmetric second-order tensors \cite{walpole1984fourth}. In this case, we study Electrostriction, a property causing all electrical non-conductors to change their shape under the application of an electric field. The relationship is described as:  
\begin{equation} 
       T_{ij}=V_{ijkl}V_{kl} \label{4th-order}
\end{equation}
 Here $T_{ij} \in \mathbb{R}^{3\times 3}$ is a symmetric traceless second-order strain tensor. $S_{kl} \in \mathbb{R}^{3\times 3}, S_{kl}=S_k S_l$ where $V_k$ and $V_l$ are first-order electric polarization density. Note here $V_{kl}$ is symmetric. $V_{ijkl}\in \mathbb{R}^{3\times 3\times 3 \times 3}$ is the electrostriction coefficient. 

Based on the formulation above, this system is symmetric. Since $S_{ij}$ is symmetric, $T_{ij}^T=(V_{ijkl} S_{ij})^T=S_{kl} V_{klij}=T_{ij}$. This could guarantee that $T_{ij}$ is also symmetric. Due to the face that the system is symmetric, applying a random rotation matrix $R$, $R(T)=R(VS)$.

In order to qualify strain for study on Electrostriction, we aim to predict the strain, given the simulation of electrostriction coefficient and electric polarization density. 
\subsubsection{Data generation and model description}
Based on Eqn. \ref{4th-order}, we first generate random data to obtain simulated electrostriction coefficient tensor $V_{ijkl}$ and electric polarization density tensor $S_{ij}$. The generation of random numbers follows a normal distribution. $T_{ij}$ is computed from above setting using $V_{ijkl}$ and $S_{ij}$. Denote the dataset as, $D=\{x_i,y_i\}_{i=1}^{N}$. To form a proper dataset $D$ with $N$ elements for a machine learning model for the study on Electrostriction, the input $x$ is set up to be a vector where $x \in \mathbb{R}^{90}$. Specifically, $T_{ij}$ is composed by flattened $V_{ijkl}$ and $S_{ij}$. The output $y\in \mathbb{R}^{9}$ is a vector, which is the flattened result of second-order tensor $T$. To compare the difference of our method to the baseline method, we trained two models with the same hyper-parameter using different amounts of training data, ranging from $10,000, 20,000, ..., 100,000$. $85\%$ of generated data is used for training, and $15\%$ of data is used for testing. A rotation set with 10,000 random rotation matrices is also generated for evaluating the property of rotation-equivariant, denoted by $\{R_i\}_{i=1}^{10000}$. The model setup is the same compared to Sec. \ref{Expr1DGMD}. We use Numpy to generate this simulated dataset by generating a random symmetric fourth-order tensor $V$, and second-order tensor $S$. $T$ is computed from generated $V$ and $S$ by Eqn. \ref{4th-order}. 
\begin{figure}
    \centering
    \includegraphics[scale=1.4]{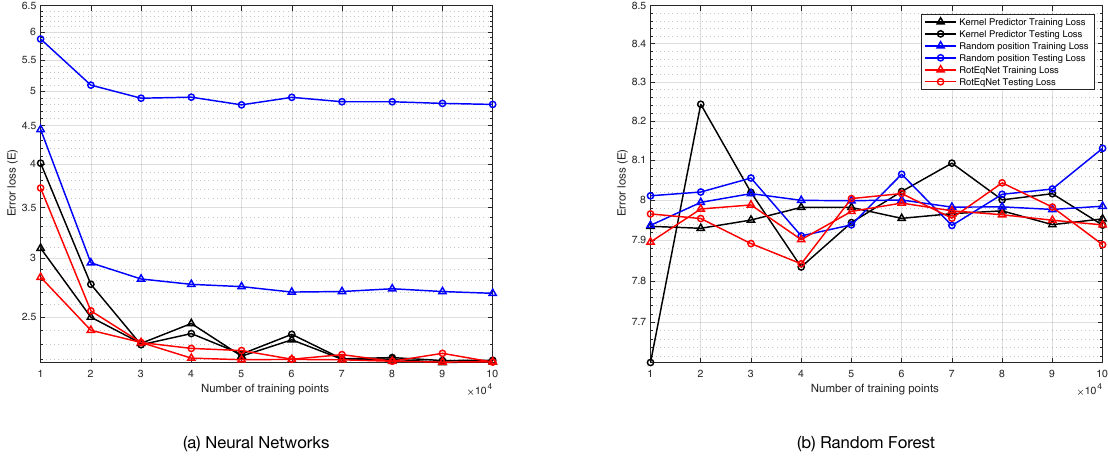}
    \caption{Error of training with baseline model with random position, RotEqNet, and kernel predictor with the standard position for (a) Neural Networks and (b) Random Forests in the case study of Electrostriction. Different colors represent different experimental groups. The RotEqNet model is trained with random positions and tested with random positions (red curves). Baseline models that trained and tested on raw data are shown as blue curves. The performances of kernel predictors that trained and tested with only standard positions are also shown as black curves. Training errors are shown with lines marked with triangles, testing errors are shown with lines marked with circles. }
    \label{fig:expr4}
\end{figure}

\subsubsection{Results}
\label{sec4:order4:result}
The effect of error reduction is evaluated for the first. The validation error $E$ is defined as the Mean Squared Loss using the formulation in Eqn. \ref{Error}. This evaluation $E$ represents the expected error of model $M$ with dataset $D$. Fig. \ref{fig:expr4} shows the performance of Neural Networks and Random Forests as kernel predictor separately. It is observable that in high-order cases, Neural Networks have huge superiority to Random Forests. Details will be demonstrated in the following paragraphs.

We are starting with Neural Networks, Fig. \ref{fig:expr4}(a) shows the error reduction effect of RotEqNet with Neural Network as a kernel predictor. 
As shown in blue curves, the first experiment group focuses on the accuracy of the baseline model on raw data with random rotated positions. The second experiment group is RotEqNet marked with red curves. As shown in black curves, it shows the performance of the kernel predictor trained by standard position. For 10,000 training samples, the testing error of RotEqNet is 4.0106 and the testing error of baseline model is 8.6458 with 53.61\% of error reduction. The testing performance of the kernel predictor is only evaluated on the testing set with only standard positions. It will be helpful to explain the reason for the improved performance of RotEqNet. 

To interpret the experimental results, firstly, RotEqNet could reach a better learning performance compared to simply applying Neural Networks (baseline method). A dataset with only standard positions has lower training difficulty compared to random positions. This claim is supported by black curves in Fig. \ref{fig:expr4}(a), the performance of the kernel predictor is much better than the baseline model. RotEqNet could obtain better performance for utilizing rotation symmetry as a prior, and training kernel predictor with only standard positions. Moreover, RotEqNet has a good generalization result without clear overfitting. The training error and testing error of RotEqNet is considerably close to each other, and sometimes, the testing error of RotEqNet is even slightly better than its training error. By contrast, applying raw data in learning directly on $M_{baseline}$ would result in an overfitted model. The testing error is much higher compared to the training error. To demonstrate the improved learning result in generalization, for example, when $N=100,000$, the difference between training and testing errors for RotEqNet is only $0.0024$ while the difference of the baseline method is $2.1118$. As a quick conclusion, for Neural Networks as a kernel predictor, the application of RotEqNet would be better compared to the baseline method. 

\begin{table} 
\centering
\begin{tabular}{ |p{2.5cm}|p{4.0cm}|p{4.0cm}| }
 \hline
 Kernel predictor& Training Error Reduction &Testing Error Reduction\\
 \hline
 Neural Networks   & 18.93\%    &	54.63\%\\
  \hline
 Random Forests &   0.58\%  & 2.96\% \\

 \hline
\end{tabular}
\caption{Evaluation of error reduction for RotEqNet with different kernel predictor.}
\label{tab1:expr4}
\end{table}

Further, Fig. \ref{fig:expr4}(b) shows the error reduction effect of RotEqNet using Random Forest as a kernel predictor. 
At first glance, we could find that the curves for Random Forests are quite messy without certain patterns like Fig. \ref{fig:expr4}(a). The general performance of using Random Forests as a kernel predictor is worse in both aspects of performance and generalization. In Tab. \ref{tab1:expr4}, we could observe a training error reduction for 0.58\% and testing error reduction of 2.96\%. Even if we could still see the general error of RotEqNet seems to be slightly lower than the baseline method. This result is not comparable to the error reduction performance with setting Neural Networks as a kernel predictor. 
As another point, selecting Random Forests as a kernel predictor fails to extract learning rules with the standard position. As we could observe the black curves in Fig. \ref{fig:expr4}(b) is not showing an improved performance as good as using Neural Networks. 
Finally, the learned model of RotEqNet is also not getting a model with better generalization capability. There is no significant reduction of overfitting error compared to the baseline method. 

\begin{table} 
\centering
\begin{tabular}{ |p{1.3cm}|p{2.4cm}|p{2.4cm}|p{2.4cm}|p{2.4cm}| }
 \hline
 Model& Baseline (NN)&RotEqNet (NN)&Baseline (RF)& RotEqNet (RF)\\
 \hline
  $E_D$ & 3.9290 & \textbf{2.7960} &4.8976 &  4.8740\\

 \hline
\end{tabular}
\caption{Evaluation of Rotation-equivariant property between baseline model and RotEqNet.}
\label{tableExpr4}
\end{table}
To evaluate the rotation-equivariant property of RotEqNet for this fourth-order case, we designed an experimental process as stated in Sec. \ref{expr1:result}. 
The error evaluation measurement ($E_D$), as defined in Eqn. \ref{ED}, focuses more on the model's error on real data for all rotations. As shown in Tab. \ref{tableExpr4}, when using neural networks, baseline method has large error for $E_D$. RotEqNet helps in keeping the rotation-equivariant property as observing error reduction in $E_D$ for $28.86\%$. Considering the case using Random Forests as a kernel predictor, as shown in the previous paragraph, because of the reason that Random Forests are relatively bad in learning fourth-order data, the performance of $E_M$ is still affected, which results in a large error in the prediction of RotEqNet with Random Forest.

\label{discussion}
The large error reduction observed in case studies raised new opportunities in solving the problem of the physical system with rotation symmetry. Most physical systems have the property of rotation symmetry, and currently, there exist few works that could provide a theoretical guarantee to this property for machine learning methods. A key point in this problem is to design a properly defined algorithm to obtain rotation invariant for high-order tensors. This paper has shown RotEqNet with theoretical and experimental results aiming to solve the problem of rotation symmetry.

We first define a standard position as rotation invariant, which is compatible for high-order tensors. It allows us to extract the rotation invariant of high-order tensors using a contraction, diagonalization, and QR factorization. The theoretical guarantee is shown in Thm. \ref{thm:2}, and the algorithm is shown in Alg. \ref{rotexAlgo}. RotEqNet is built on Alg. \ref{rotexAlgo} with a kernel predictor which only deals with standard positions (rotation invariants). By setting kernel predictor with Neural Networks and Random Forests, these two methods are compared with baseline methods in four different case studies focusing on second-order linear, second-order nonlinear, third-order linear, and fourth-order linear cases. There are three important points to address from the observation of case studies. 

First, the definition of the standard position is successful. The definition of the standard position is not unique. We aim to define a proper version of the standard position to simplify the learning task by removing the effect of rotation symmetry. In our case, the standard position satisfies the definition of rotation-invariants, which selects a representative point from the orbit of an element via diagonalization (or QR factorization). The experimental results are compatible with this definition of the standard position. We could observe in most of the cases, training kernel predictors with only rotation invariants could reach the lowest error. The reduced error means that the rotation invariant in our definition could lower the difficulty of this learning task as we previously discussed the reason in Sec. \ref{reason}. 

Second, RotEqNet is equipped with the property of Rotation-equivariant. As we could observe from the results of case studies, the rotation error $E_M$ is typically low compared to baseline methods. 
The perseverance of the property of Rotation-equivariant shows the successful design of RotEqNet and the correctness of Thm. \ref{thm:2}. Operating with Alg. \ref{rotexAlgo}, the property of Rotation-equivariant of RotEqNet could be held if and only if Thm. \ref{thm:2} is correct. 
Further, this fact would cause an error reduction for RotEqNet. As stated in the previous paragraph, training with rotation invariants will result in a lower error. Under this situation, adding with the property of Rotation-equivariant, this would cause RotEqNet could process this system with any rotation. 

The two reasons above are the main reasons that are causing the error reduction for RotEqNet. There is also one point to mention is the selection of kernel predictor. The model selection of kernel predictor will affect the learning results significantly since the kernel predictor is essential in learning the physical system without the effect of rotation symmetry. Neural Networks is the best model in the design of the data-driven method for physical systems because of its flexibility to approximate arbitrary functions. We only reported the performance of Neural Networks and Random Forests as previous work by Ling \cite{ling2016machine}. As described in Sec. \ref{sec4:order4:result}, the performance of Random Forests is limited compared to Neural Networks. Also, as a general trend in previous experiments, Neural Networks are usually reaching better performance compared to Random Forests. As a quick conclusion, we believe the application of Neural Networks as a kernel predictor has a series of advantages than other machine learning models. 

We wish to further discuss about another error evaluation method of rotation-equivariance property that we do not mention in case studies. Consider a type of error evaluation, evaluating rotation error of model itself, the error $E_M$ is defined as: 
\begin{equation}
       E_M=\frac{\sum_{i=1}^{N} [(M^{\theta}(R_i(X_0))-R_i( M^{\theta}(X_0))]^2}{N}
    \label{EM}
\end{equation}
The evaluation of this error is actually trivial since we have already rigorously provided a proof in Theorem \ref{thm:2} showing the rotation-equivariance property of RotEqNet. We applied this evaluation in first two case studies, and the estimated error is around $10^{-16}$ for all these cases.

For future work, there are three directions to this paper: a better definition of standard position, application to other groups, and generalization to non-symmetric systems. For the first direction, for the current definition, the rotation invariant of odd-order tensors is not reaching equivalent performance as even-order tensors. It would be a good work for revising the definition of standard position for odd tensors. Second, besides rotation symmetries, there are also physical systems with other group-equivariant properties such as scaling and transaction. This work could provide a method in solving problems with other groups, but the detailed design of an algorithm should differ from case to case. Third, current work could only deal with the symmetric system. However, for a general case, if the system is not symmetric, there are certain methods to use RotEqNet in a symmetric system for solving a non-symmetric system. A good trick to consider, for example, is to deal with $PP^T$, where $P$ is a matrix. This is a great intuition to extend our current work into non-symmetric physical systems. 

\section*{Acknowledgments}
 G. Lin would like to acknowledge the support from National Science Foundation (DMS-1555072 and DMS-1736364).

\section{Appendix}
\renewcommand{\theequation}{A-\arabic{equation}}

\label{appendix:tensor}
\subsection{Lemma 2.1}
\begin{proof}
We will use column vector convention to represent vectors in $V$. 
Let $v_1$ and $v_2$ be vectors in $V$. Then
\begin{equation}
    M(v_1\otimes v_2) = M(v_1)\times M(v_2)^t
\end{equation}
Then, 
\begin{align}
    M(R(v_1\otimes v_2)) &=  M(R(v_1) \otimes R(v_2))\\
                        & = M(R(v_1)) \times M(R(v_2))^t \\
                        & = M(R)\times M(v_1)\times M(v_2)^t\times M(R)^t \\
                        & = M(R)\times M(v_1\otimes v_2) \times M(R)^t
\end{align}
Therefore, 
\begin{equation}
    M(R(T)) = M(R)\times M(T) \times M(R)^t
\end{equation}  
\end{proof}
\subsection{Lemma 2.2}
\begin{proof}
We will use column vector convention to represent vectors in $V$. 
Let $v_1$ be vector in $V$. 
Then
\begin{equation}
    M(R(v_1))  = M(R)\times M(v_1)
\end{equation}
Therefore, 
\begin{equation}
    M(R(T)) = M(R)\times M(T)
\end{equation}  
\end{proof}
\subsection{Lemma 2.3}

\begin{proof}
Since both $C(a, b)$ and $g$ are linear, we may assume that $T$ is of 
the form $v_{i1}\otimes\cdots\otimes v_{in}$. 
\begin{align}
    C(a,b)(g(T)) & = C(a,b)(g(v_{i1})\otimes\cdots\otimes g(v_{in})) \\
        & = \langle g(v_{ia}), g(v_{ib})\rangle g(v_{i1})\otimes \cdots
        \check{g(v_{ia})}\cdots\check{g(v_{ib})}\cdots\otimes g(v_{in}) 
\end{align}
Since $g$ is a rotation, it preserves the inner product i.e.
\begin{equation}
    \langle g(v_{ia}), g(v_{ib})\rangle = \langle v_{ia}, v_{ib} \rangle
\end{equation}
So 
\begin{align}
    C(a,b)(g(T)) & = C(a,b)(g(v_{i1})\otimes\cdots\otimes g(v_{in})) \\
        & = \langle v_{ia}, v_{ib}\rangle g(v_{i1})\otimes \cdots
        \check{g(v_{ia})}\cdots\check{g(v_{ib})}\cdots\otimes g(v_{in}) \\
        & = g(C(a, b)(T))
\end{align}
\end{proof}

\subsection{Proof of Theorem 1}
\begin{proof}
\label{proof:thm2}

Since the position standardization algorithm defines standard position differently for even and odd orders. We show our proof on even and odd cases separately. 

Suppose $T$ has even order. 

Let $\mathcal{C}$ be the sequence of contraction
along the first two axes such that $C(T) = T^2$, where $T^2$ is
a second-order tensor as described in the algorithm. 

Given arbitrary even high order tensor $T$, we could perform contraction to a second order tensor $T^2$ via first two indices: 
\begin{equation}
    T^2 = \mathcal{C}(T^n)
\end{equation}
For $T^2$, using Lemma \ref{lemma2.1}, there exists a
rotation $R$ such that: 
\begin{equation}
    T^2=R(T^{2}_s),
    \label{T_2}
\end{equation}
where $R(T^{2}_s) = R T^{2}_s R^t$. $T_2$ is diagonalizable because it is symmetric. Since $R$ is represented
by a orthonormal matrix, therefore $R^t = R^{-1}$.

Based on Lemma \ref{lemma2.3}, we know rotation commutes with contraction. Therefore, based on $S$ the standard position $S(T)$ is defined as
\begin{equation}
    S(T)=R^{-1}(T)
\end{equation}
Consider a rotation operation $P$ in its matrix form. When we act $P$ on $T$ we obtain a new tensor $P(T)$. For this new tensor, applying contraction we could have: 
\begin{equation}
    P(T^2)=\mathcal{C}(P(T^n))
\end{equation}
For $P(T^2)$, since Equ. \ref{T_2}, applying Lemma \ref{lemma2.1}, \begin{equation}
    P(T^2)=P(\mathcal{C}(T^n))=(P\times R)(T^2_{s})
\end{equation}
For its standard position $S(P(T))$ we have: 
\begin{equation}
    S(P(T))=(R^{-1}\times P^{-1}\times P)(T)=R^{-1}(T)=S(T)
\end{equation}
To simplify, for a rotation operation $P$ acting on an even high order tensor $T$, 
\begin{equation}
    S(P(T))=S(T).
\end{equation}
This satisfy the definition of rotation invariant. Therefore, for even cases, the standard position $S(T)$ is a rotation invariant.

Suppose $T$ has odd order. 

Let $\mathcal{C}$ be the sequence of contraction
along the first two axes such that $C(T) = T^3$, where $T^3$ is
a third-order tensor as described in the algorithm. 

\begin{equation}
    T^3=\mathcal{C}(T^n)
\end{equation}
Let $V_1, V_2, V_3$ be vectors of contraction operation on $T^3$ via different axes, \emph{i.e.}, 
\begin{equation}
    V_1=C(2,3)(T^3) \\ V_2=C(1,3)(T^3) \\ V_3=C(1,2)(T^3)
\end{equation}
Based on $S$, we have
\begin{equation}
    [V_1\;\;V_2\;\;V_3]=R_1\times U_1
    \label{thm2:equ1}
\end{equation}
In this case, 
\begin{equation}
    S(T^n)=R_1^{-1}(T^n)
    \label{thm2:equ5}
\end{equation}
Consider any rotation operation $P$ acting on $T^n$. We have,
\begin{equation}
    P(T^3)=P(\mathcal{C}(T^n))
\end{equation}
Using QR-factorization, 
\begin{equation}
    [P\times V_1\;\;P\times V_2\;\;P\times V_3]=R_2\times U_2
    \label{thm2:equ2}
\end{equation}
The standard position of $P(T^n)$ will be defined as:
\begin{equation}
    S(P(T^n))=R_2^{-1}(P(T^n))
    \label{thm2:equ6}
\end{equation}
Using Remark \ref{lemma2.2}, we could obtain
\begin{equation}
    \mathcal{C}(2,3)(P(T^3))=P\times \mathcal{C}(2,3)(T^3)=P\times V_1
\end{equation}
Considering $V_2$ and $V_3$, for the same reason, we could know that
\begin{equation}
    [P\times V_1\;\;P\times V_2\;\;P\times V_3]=P\times [V_1\;\;V_2\;\;V_3]
    \label{thm2:equ3}
\end{equation}
By reorganizing \ref{thm2:equ1}, \ref{thm2:equ2}, and \ref{thm2:equ3}, 
\begin{equation}
    [V_1\;\;V_2\;\;V_3] = R_1\times U_1=P^{-1}\times R_2\times U_2
\end{equation}
Since QR-factorization is unique \citep{golub2012matrix}, we should have that $U_1=U_2$. Therefore, 
\begin{equation}
    R_2 = P\times R_1
    \label{thm2:equ4}
\end{equation}
Plugging \ref{thm2:equ4} into \ref{thm2:equ6}, comparing the result of \ref{thm2:equ5} we have:
\begin{equation}
    S(P(T^n))=R_2^{-1}(P(T^n))=(R_1^{-1}\times P^{-1} \times P)(T^n) = R_1^{-1}(T^n)=S(T^n)
\end{equation}
Here, we shown that given any rotation operation $P$ on $T^n$ ($n$ is odd). By position standarization algorithm $S$, we will always have:
\begin{equation}
    S(P(T^n))=S(T^n)
\end{equation}
This satisfy the definition of rotation invariant. Therefore, for odd cases, the standard position $S(T)$ is a rotation invariant. 

Combining with the proof on even and odd cases, we have shown $S$ is rotation-invariant. 
\end{proof}
\bibliography{refs.bib}

\end{document}